\documentclass[journal,letterpaper]{IEEEtran}
\IEEEoverridecommandlockouts
\usepackage{xr}%
\usepackage{calc}
\usepackage{ifthen}

\newboolean{EXT_ABS} %
\setboolean{EXT_ABS}{false} %

\usepackage[utf8]{inputenc} 
\usepackage[T1]{fontenc}
\usepackage{xurl}
\usepackage[caption=false,font=footnotesize]{subfig}
\usepackage[cmex10]{amsmath} %
\usepackage{subfiles}

\usepackage{mleftright}       %
\mleftright                   %

\usepackage{graphicx}         %
\usepackage{booktabs}         %

\usepackage{comment}
\usepackage{balance}
\usepackage[url,hyperrefblack,notheorems,IEEEtran]{research17} %
\usepackage{amsmath,amssymb,amsfonts,amsbsy}
\usepackage{tikz}
\usepackage{tikz-cd}
\usepackage{tablefootnote}
\usepackage[lined,boxed,commentsnumbered,linesnumbered,algoruled]{algorithm2e}
\usetikzlibrary{matrix,decorations.pathreplacing}
\pgfkeys{tikz/mymatrixenv/.style={decoration=brace,every left delimiter/.style={xshift=4pt},every right delimiter/.style={xshift=-4pt}}}
\pgfkeys{tikz/mymatrix/.style={matrix of math nodes,left delimiter=[,right delimiter={]},inner sep=1pt,row sep=0em,column sep=0em,nodes={inner sep=6pt}}}
\usepackage{amsthm} %

\newtheorem{lemma}{\mylemmaname}
\newtheorem{corollary}{\mycorollaryname}

\newtheorem{proposition}{\mypropositionname}

\newtheorem{definition}{\mydefinitionname}
\newtheorem{remark}{\myremarkname}

\usepackage[capitalize]{cleveref}
\allowdisplaybreaks

\usepackage{array}
\usepackage{cases}
\usepackage{multirow}
\usepackage{enumitem} %
\usepackage{nicematrix} %
\usepackage[rightcaption]{sidecap} %

\DeclareMathOperator{\maxstar}{max} %

\newcommand{\Tan}{\textnormal{Tan}}

\newcommand{\Gr}{\mathcal{G}} %

\newcommand{\G}{\mathscr{G}} %
\newcommand{\Ga}{\mathscr{G}_\textnormal{A}}
\newcommand{\Gb}{\mathscr{G}_\textnormal{B}}

\newcommand{\Gs}{\mathscr{G}^\square}

\newcommand{\vv}{\mathcal{V}}
\newcommand{\ee}{\mathcal{E}}

\newcommand{\Cc}{\mathscr{C}}
\newcommand{\Ca}{\mathscr{C}_{\textnormal{A}}}
\newcommand{\Cb}{\mathscr{C}_{\textnormal{B}}}

\newcommand{\hha}{\mat H_{\textnormal{A}}}
\newcommand{\hhb}{\mat H_{\textnormal{B}}}

\newcommand{\inci}{I}  %
\newcommand{\lineg}{L}  %

\makeatletter
    \def\makeBB#1{\expandafter\def\csname#1bb\endcsname{\mathbb{#1}}}
    \def\makeBF#1{\expandafter\def\csname#1bf\endcsname{\mathbf{#1}}}
    \def\makeCAL#1{\expandafter\def\csname#1cal\endcsname{\mathcal{#1}}}
    \def\makeSF#1{\expandafter\def\csname#1sf\endcsname{\mathsf{#1}}}
    \def\makeSCR#1{\expandafter\def\csname#1scr\endcsname{\mathscr{#1}}}
    \def\makeTN#1{\expandafter\def\csname#1tn\endcsname{\textnormal{#1}}}

    \@for\i:={A,B,C,D,E,F,G,H,I,J,K,L,M,N,O,P,Q,R,S,T,U,V,W,X,Y,Z,a,b,c,d,e,f,g,h,i,j,k,l,m,n,o,p,q,r,s,t,u,v,w,x,y,z}\do{
        \expandafter\makeBB\i
        \expandafter\makeBF\i
        \expandafter\makeCAL\i
        \expandafter\makeSF\i
        \expandafter\makeSCR\i
        \expandafter\makeTN\i
    }
\makeatother

\usepackage{soul} %
\usepackage[colorinlistoftodos, textsize=footnotesize]{todonotes} %

\newboolean{EXTENDED_VERSION}
\setboolean{EXTENDED_VERSION}{true} %

\newboolean{SINGLE_COL} %
\setboolean{SINGLE_COL}{true} %

\newboolean{SPLIT_SUPMAT} %
\setboolean{SPLIT_SUPMAT}{true} %
\begin{document}
\title{Improved Decoding of Quantum Tanner Codes\\[0.025em] Using Generalized Check Nodes}
\author{Olai \AA.~Mostad,~\IEEEmembership{Student Member,~IEEE},
Eirik~Rosnes,~\IEEEmembership{Senior Member,~IEEE}, and
Hsuan-Yin~Lin,~\IEEEmembership{Senior~Member,~IEEE}
 \thanks{{Olai \AA.~Mostad, E.~Rosnes, and H.-Y.~Lin are with Simula UiB, N-5006 Bergen, Norway (e-mail: olai@simula.no, eirikrosnes@simula.no, lin@simula.no).}}}

\maketitle

\begin{abstract}
    We study the decoding problem for quantum Tanner codes and propose to exploit the underlying local code structure by grouping check nodes into more powerful generalized check nodes for enhanced iterative belief propagation (BP) decoding by decoding the generalized checks using a maximum a posteriori (MAP) decoder as part of the check node processing of each decoding iteration. We mainly study the finite-length setting and show that the proposed enhanced generalized BP decoder for quantum Tanner codes significantly outperforms the standard quaternary BP decoder with memory effects, as well as the recently proposed Relay-BP decoder, even outperforming generalized bicycle (GB) codes with comparable parameters in some cases. For other classes of quantum low-density parity-check  (qLDPC) codes, we propose a greedy algorithm to combine checks for generalized BP decoding. However, for GB codes, bivariate bicycle codes, hypergraph product codes, and lifted-product codes, there seems to be limited gain by combining simple checks into more powerful ones. To back up our findings, we also provide a theoretical cycle analysis for the considered qLDPC codes.
\end{abstract}

\section{Introduction}

Efficient and improved decoding of quantum low-density parity-check (qLDPC) codes has recently received significant attention, see, e.g.,~\cite{YaoLabanHagerGraellAmatPfister24_1, MiaoSchnerringLiSchmalen25_1, KuoLai22_2, Mueller-etal25_1sub, YeWeckerDelfosse25_1sub, LiuGongClark25_1sub, ValentiniForlivesiTalaricoChiani25_1sub} and references therein, as iterative belief propagation (BP) decoding (both binary and nonbinary) in general performs worse than for classical LDPC codes. The degraded performance can usually be explained by a larger number of $4$-cycles in the underlying Tanner graphs and a large number of degenerate errors,  i.e., correctable errors affecting the logical state equally as different correctable errors. 

In this work, we propose to exploit the underlying local code structure of quantum Tanner codes \cite{LeverrierZemor22_1} for improved iterative BP decoding, in the sense of exploiting a natural grouping of check nodes into \emph{generalized} checks that can be decoded using a maximum a posteriori (MAP) decoder as part of the check node processing of each decoding iteration. 
The quantum Tanner codes are constructed from a certain type of square complex by placing bits on the squares and constraints on the vertices. These constraints are more involved than single parity-checks, but can still be viewed as single generalized checks for iterative BP decoding. We also consider the intermediate case, where some checks per vertex are combined but not all.
The idea here is similar to classical generalized LDPC codes first introduced by Tanner \cite{Tanner81_1} and later extensively investigated (see, e.g.,~\cite{SimegnArtemasovAndreevRybinFrolov25_1} and references therein).  The proposed solution can be added to any iterative BP decoder for qLDPC codes, by combining checks of the parity-check matrix (preferably with meaningful overlap) into generalized checks. However, we have only observed limited gains for other classes of qLDPC codes, like generalized bicycle (GB) codes~\cite{KovalevPryadko13_1, PanteleevKalachev21_1}, bivariate bicycle (BB) codes~\cite{BravyiCrossGambettaMaslovRallYoder2024}, hypergraph product (HGP) codes~\cite{KovalevPryadko12_1, TillichZemor14_1}, and lifted-product (LP) codes~\cite{PanteleevKalachev22_2}. Here, we combine the method with quaternary (BP$_4$) decoding with additional memory effects~\cite[Alg.~1]{KuoLai22_2}, denoted by MBP$_4$, combined with ordered statistics decoding (OSD) for post-processing~\cite{PanteleevKalachev21_1}. 
Our focus is on short-to-medium length quantum Tanner codes, i.e., the so-called finite-length regime. So far, there is limited work on this regime for quantum Tanner codes, except for a few recent works \cite{MostadRosnesLin25_1, LeverrierRozendaalZemor25_1sub, RadeboldBartlettDoherty25_1sub, Wang-etal26_1sub}. 
We show by simulation that the proposed decoder can significantly outperform conventional MBP$_4$+OSD-$1$ and the recently proposed Relay-BP  (``Relay-ensembling with locally-averaged memory'') decoder \cite[Alg.~1]{Mueller-etal25_1sub} for small-to-medium length quantum Tanner codes.

As the number of $4$-cycles in the Tanner graph is important for the performance of a BP decoder, we give a description of the $4$-cycles of the Tanner graphs of quantum Tanner codes, both for the usual BP decoders and the proposed one. From this description, we derive a criterion that guarantees no $4$-cycles for a binary decoder and show how to calculate the number of $4$-cycles for a quaternary decoder.
We also provide a similar description for HGP codes, and argue that the descriptions for quantum Tanner codes and HGP codes together give such a description for LP codes.

Our main contributions can be summarized as follows.

\begin{enumerate}
    \item Our proposed generalized MBP$_4$  decoder significantly outperforms the conventional MBP$_4$ decoder for quantum Tanner codes with and without OSD post-processing of order $1$ on the depolarization channel, at the expense of a higher  decoding complexity. Interestingly, contrary to the case of the conventional MBP$_4$ decoder, OSD post-processing is not necessary in order to approach the best performance with the  generalized MBP$_4$ decoder. However, when applied to other classes of qLDPC codes, like GB, BB, LP, and HGP codes, our simulations show no noticeable performance gain by combining simple checks into more powerful generalized checks.
         \item Building upon the previous point, we show examples where quantum Tanner codes outperform optimized GB codes, LP codes, and HGP codes with comparable parity-check row weights when decoded with the proposed generalized decoder. Moreover, as illustrated in  \cref{fig:logical_error_rate_432_different_level}, for a $[[432,16,\leq 26]]$ quantum Tanner code from \cite[Tab.~2]{Wang-etal26_1sub}, when combining only three checks into each generalized check, our proposed decoder shows  a significantly performance improvement compared to Relay-BP$_4$, even with what seems to be a lower overall decoding complexity. This example shows that the proposed decoder can also possess a favorable performance-complexity trade-off. %
    \item Finally, our findings summarized above have been backed up theoretically by providing a cycle analysis for the considered qLDPC codes.
\end{enumerate}

\subsection{Notation}

Vectors are denoted by bold letters, matrices by sans serif uppercase letters, 
and sets (and groups)  by calligraphic uppercase letters, e.g., $\bm a$, $\mat{A}$, and $\set{A}$, respectively. The neutral element of a group will be denoted by $1$, while $e$ is reserved for an edge in a graph. Linear codes, graphs,  %
and square complexes are denoted by script uppercase letters, e.g., $\mathscr{C}$.  %
A graph with vertex set $\set{V}$ and edge set $\set{E}$ is denoted by $\code{G}=(\set{V},\set{E})$, and may have parallel edges and self-loops unless stated otherwise. The edges incident to a vertex $v$ is called the local view of $v$ and denoted $\ee(v)$. We work with undirected graphs with an ordering on the edges.
The disjoint union of sets $\set A, \set B$ is denoted by $\set A\sqcup \set B \eqdef \{(a,0), (b,1) : a\in \set A, b\in \set B\}$.
A linear code $\mathscr{C}$ of length $n$, dimension $k$, and minimum distance $d$ is sometimes referred to by %
$[n, k, d]$, and
its dual code is denoted $\mathscr{C}^\perp$. The binary field is denoted by $\Field_2$, while $\Cbb$ denotes the complex numbers. %
Standard order notation $\Theta(\cdot)$ and $O(\cdot)$ is used for asymptotic results.
{Given natural numbers $a$ and $b$, we let $[a]\eqdef\{1,2,\ldots,a\}$, $[a:b]\eqdef\{a,a+1,\ldots,b\}$, for $a \leq b$, and $[a:b]\eqdef\{a,a-1,\ldots,b\}$ for  $a > b$. For real numbers $a$ and $b$, we define $\maxstar^*(a,b) \eqdef \max(a,b) + \log(1+\mathrm{e}^{-|a-b|})$ and denote the component-wise logical OR of  $\vect a,\vect b \in \Field_2^n$ by $\vect a \lor \vect b$, that is, $(\vect a \lor \vect b)_i \eqdef a_i \lor b_i$. A quantum code with length $n$ and code dimension $k$ is denoted as $[[n,k]]$, or more explicitly as $[[n,k,d]]$ when the minimum distance $d$ is specified.
For the four Pauli matrices, we use the notation
$$
\mat I \eqdef \begin{bmatrix}
    1 & 0 \\ 0 & 1
\end{bmatrix}, 
\ \
\mat X \eqdef \begin{bmatrix}
    0 & 1 \\ 1 & 0
\end{bmatrix}, 
\ \
\mat Y \eqdef \begin{bmatrix}
    0 & -i \\ i & 0
\end{bmatrix}, 
\ \
\mat Z \eqdef \begin{bmatrix}
    1 & 0 \\ 0 & -1
\end{bmatrix}.
$$
Given $n$ Pauli matrices $\mat W_{i} \in \{\Isf, \Xsf, \Ysf, \Zsf\}$, the Kronecker product 
$\bigotimes_{i=1}^n{\mat W}_i$ is called a Pauli string, denoted by $\bm{\mat W}$, and often considered as row vectors $[\mat W_1, \dots,  \mat W_n]$.
The operator $\boxplus$ (referred to as box-plus) is defined by
$$
\mathop{\boxplus}_{i=1}^\kappa a_i \eqdef 2 \tanh^{-1} \left( \prod_{i=1}^\kappa \tanh\frac{a_i}{2}\right),
$$
for $a_1,\dots, a_\kappa \in \Rbb$.
Finally, we define $\Bar{\imath}$ as the complement of $i \in \{0,1\}$, i.e.,  $\Bar{\imath}=1$ for $i=0$ and $0$ for $i=1$, and let $\Wsf_0=\Xsf$ and  $\Wsf_1 =\Zsf$.

\section{Preliminaries}

We recall background on graphs related to error correction, the notion of quantum error-correcting codes, some particular constructions of quantum error-correcting codes, and the channel we consider for decoding.
At the reader's discretion, this section can safely be skipped and referred back to later when needed.

\subsection{Graph Constructions}\label{sec:graph-prelim}

\begin{definition}[Incidence graph]
    The (vertex-edge) incidence graph of a graph $\G = (\vv, \ee)$ is the bipartite graph $\inci(\G)$ with vertices $\vv\sqcup \ee$ and an edge $(v, e)$ for every $e\in \ee$ incident to $v\in \vv$.
\end{definition}

The incidence graph is the same as the incidence poset after we forget the direction of the edges of the incidence poset. 
A related object is the line graph.

\begin{definition}[Line graph]
    The line graph of a graph $\G = (\vv, \ee)$ is the graph $\lineg(\G)$ with vertices $\ee$ and an edge between $e, e' \in \ee$ whenever $e$ and $e'$ share a common endpoint in $\G$.
\end{definition}

In the context of quantum Tanner codes, the following definition of Tanner code is usually taken, where the restriction of a vector $\vect c \in \Field_2^{|\ee|}$ defined on the edges $\ee$ of a graph to the local view of a vertex $v$ is denoted $\vect c_v$. An ordering is assumed on the set of edges $\ee$ so that we may use $\Field_2^{|\ee|}$ instead of $\{ \ee \to \Field_2\}$ as our vector space (in which case $\vect c_v$ would be the composition $\begin{tikzcd}[column sep=1em]
    \ee(v) \subseteq \ee \ar[r, "\vect c"]
    & \Field_2
\end{tikzcd}$).

\begin{definition}[Tanner code]\label{def:Tan-old}
  Let $\mathscr{C}$ be a linear code of length $\Delta$ and $\G=(\vv,\ee)$ be a $\Delta$-regular graph, possibly with parallel edges but without self-loops. We define the Tanner code on $\G$ and $\mathscr{C}$ as
  $$\textnormal{Tan}(\G,\mathscr{C}) \triangleq \{\vect{c}\in\Field_2^{|\ee|}\colon\vect{c}_v\in\mathscr{C}\text{ for all } v\in \vv\}.$$
\end{definition}

The definition assumes that $\G$ has labeled local views. One may think of this as an order on each local view, where each order is (possibly) independent of the other orderings.
Other definitions also exist, such as Tanner's original one~\cite{Tanner81_1}. It is often demanded that the graph $\G$ is bipartite, so that one can put bits on one half of the vertices and constraints on the other half. This is essentially the same as allowing hypergraphs in \cref{def:Tan-old}, and in particular, one has
$$
\Tan(\G, \Cc) = \Tan'(\inci(\G), \Cc) = \Tan(\lineg(\inci(\G)), \Cc),
$$
where $\Tan'$ denotes the alternative definition of a Tanner code and $\lineg(\G)$ is the line graph of a graph $\G$.

    Given a parity-check matrix $\mat H$, the \textit{Tanner graph} of $\mat H$ is the bipartite graph with one check node for each row of $\mat H$, one variable node for each column of $\mat H$, and an edge between variable node $j$ and check node $i$ whenever $\mat H_{ij}=1$.
    When rows of $\mat H$ are combined into generalized checks through a collection $\Hscr = \{\Hsf_c\}_{c\in \vv_{\textnormal{c}}}$, for some set $\vv_{\textnormal{c}}$, such that
    $$
    \mat H = \left[\begin{smallmatrix}
        \Hsf_{c_1} \\ \vdots \\ \Hsf_{c_{|\vv_{\textnormal{c}}|}}
    \end{smallmatrix}\right],
    $$
    then the check nodes are $\vv_{\textnormal{c}}$. The variable nodes are the same as before, and we have an edge between $c$ and $v$ whenever column $v$ of $\mat H_c$ is nonzero. In other words, the Tanner graph for combined checks is the quotient graph of the original Tanner graph induced by the partitioning $\Hscr$ defines on the check nodes.

Our interest in studying (binary) parity-check matrices stems from their usefulness in describing quantum error correcting codes, particularly a type of code named after Calderbank, Shor, and Steane (CSS).

\subsection{Quantum Error-Correcting Codes}

\begin{definition}[CSS code]
    We say two classical linear codes $\Cscr_0, \Cscr_1$ form a CSS code $\textnormal{CSS}(\Cscr_0, \Cscr_1)$ if they satisfy $\Cscr_0^\perp \subseteq \Cscr_1$.
\end{definition}

When the classical codes $\code C_0, \code C_1$ are given by parity-check matrices $\mat H_0, \mat H_1$, respectively, the condition $\Cscr_0^\perp \subseteq \Cscr_1$ is equivalent to $\mat H_0 \trans{\mat H_1} = \mat 0$, and we sometimes write $\textnormal{CSS}(\mat H_0, \mat H_1)$ for the CSS code $\textnormal{CSS}(\code C_0, \code C_1)$. If $\Cscr_0$ and $\Cscr_1$ have codelength $n$ and dimension $k_0$ and $k_1$, respectively, then $\textnormal{CSS}(\Cscr_0, \Cscr_1)$ has length $n$ and dimension $k = k_0 + k_1 - n$. Its minimum distance is $d = \min(d_\textnormal{X}, d_\textnormal{Z})$, where
$$
d_\textnormal{X}=\min_{\vect{c}\in\mathscr{C}_0 \setminus \mathscr{C}_1^\perp}|\vect{c}|, \quad 
d_\textnormal{Z} = \min_{\vect{c}\in\mathscr{C}_1\setminus\mathscr{C}_0^\perp}|\vect{c}|.
$$

We will next give a definition for quadripartite quantum Tanner codes. There are more general code constructions that are also called quantum Tanner codes~\cite{MostadRosnesLin25_1, MostadRosnesLin25_2} for which the results of \cref{sec:cycles-QT} still hold. However, the explicit quantum Tanner codes we consider in \cref{sec:numerical-results} (see \cref{tab:codes}) all fall under the following definition.   

Let $\set A, \set B$ be subsets of a group $\Gr$ satisfying $\set A^{-1} = \set A$, $\set B^{-1} = \set B$, and $|\set A| = |\set B| = \Delta$ for some $\Delta\in \Nbb$. The quadripartite left-right Cayley complex on $\set A, \set B \subseteq \Gr$ is defined as follows. It has vertices $\vv = \Gr \times \left\{ij : i,j \in \{0,1\}\right\}$, edges $\ee = \ee_{\textnormal{A}} \cup \ee_{\textnormal{B}}$, where
\begin{align*}
    \ee_{\textnormal{A}} &= \{((g, i0),(ag, i1)): g \in \Gr, a\in \set A, i\in \{0, 1\}\} \\
    \ee_{\textnormal{B}} &= \{((g, 0j),(gb, 1j)): g \in \Gr, b\in \set B, j\in \{0, 1\}\},
\end{align*}
and squares 
$$
\{[(g, 00),(ag, 01),(gb, 10),(agb, 11)]: a \in \set A, g\in \Gr, b \in \set B\}.
$$
We write $\vv_{ij} = \Gr\times \{ij\}$, $\vv_0 = \vv_{00} \cup \vv_{11}$, and $\vv_1 = \vv_{01} \cup \vv_{10}$.
Qubits are placed on the squares, making the length of the code $n = |\vv|\Delta^2/4 = |\Gs|\Delta^2$. Restrictions will be placed on the vertices $\vv_0$ for the first part $\Cc_0$ of our CSS code, and on the vertices $\vv_1$ for $\Cc_1$. To align with \cref{def:Tan-old}, we define, for $i\in \{0, 1\}$, the graphs $\Gs_i$ with vertices $\vv_i$ and an edge for each square, connecting the two vertices of $\vv_i$ that it is incident to.

\begin{definition}[Quadripartite quantum Tanner code]\label{def:QT-code}
    Given subsets $\set A, \set B$ of a group $\Gr$ satisfying $\set A^{-1} = \set A$, $\set B^{-1} = \set B$, and classical codes $\Ca, \Cb$ of length $\Delta = |\set A| = |\set B|$ and dimensions $k_\textnormal{A} + k_\textnormal{B} = \Delta$, the quadripartite quantum Tanner code on this data is $\textnormal{CSS}(\Cc_0, \Cc_1)$, where
    \begin{align*}
        \Cc_0 &= \Tan(\Gs_0, (\Ca \otimes \Cb)^\perp) \\
        \Cc_1 &= \Tan(\Gs_1, (\Ca^\perp \otimes \Cb^\perp)^\perp).
    \end{align*}
\end{definition}

As noted already, the length of the quadripartite quantum Tanner code is $n = |\Gr|\Delta^2$. By counting checks in $\Cc_0$ and $\Cc_1$, we get the bound $k\geq |\Gr|\left(\Delta^2 - 4 k_{\textnormal{A}}k_\textnormal{B}\right)$ on the dimension.
Given parity-check matrices $\hha,\hhb, {\hha}^\perp,{\hhb}^\perp$ for the classical codes $\Ca, \Cb, {\Ca}^\perp, {\Cb}^\perp$, respectively, we can use ${\hha}^\perp\otimes {\hhb}^\perp$ and $\hha\otimes\hhb$ as parity-check matrices for $(\Ca \otimes \Cb)^\perp$ and $(\Ca^\perp \otimes \Cb^\perp)^\perp$, respectively. These can then be used to make parity-check matrices for $\Cc_0$ and $\Cc_1$ where the row weights are multiples of row weights in $\hha$ and $\hhb$ or $\hha^\perp$ and $\hhb^\perp$.

HGP codes are becoming a well-studied family of qLDPC codes. They can readily be used to construct families of quantum LDPC codes with constant rate and $\Theta(\sqrt{n})$ minimum distance as the codelength $n$ tends to infinity.

\begin{definition}[HGP code]\label{def:HGP-codes}
    Given two matrices $\mat A \in \Field_2^{m_a\times n_a}$ and $\mat B \in \Field_2^{m_b\times n_b}$, the HGP code $\textnormal{HGP}(\mat A, \mat B)$ is defined as the CSS code with parity-check matrices
    $$
    \mat H_{0} \eqdef 
    \begin{bmatrix}
        \mat A\otimes \mat I_{n_b} & \mat I_{m_a} \otimes \trans{\mat B}
    \end{bmatrix},
    \quad 
    \mat H_{1} \eqdef
    \begin{bmatrix}
        \mat I_{n_a}\otimes \mat B & \trans{\mat A}\otimes \mat I_{m_b}
    \end{bmatrix}.  
    $$
\end{definition}

LP codes are lifted versions of HGP codes and are the first  qLDPC code construction proven to yield asymptotically good codes.

\begin{definition}[LP code]\label{def:LP-codes}
    Let $\hat{\mat A}\in \Field_2^{lm_a\times ln_a}$ and $\hat{\mat B} \in \Field_2^{lm_b\times ln_b}$ be block matrices consisting of $l\times l$ blocks, where the blocks of $\hat{\mat A}$ commute with the blocks of $\hat{\mat B}$. Then the LP code $\textnormal{LP}(\hat{\mat A}, \hat{\mat B})$ is the CSS code with parity-check matrices
    $$
    \mat H_{0} \eqdef 
    \begin{bmatrix}
        \hat{\mat A}\otimes \mat I_{n_b} & \mat I_{m_a} \otimes \trans{\hat{\mat B}}
    \end{bmatrix},
    \quad 
    \mat H_{1} \eqdef
    \begin{bmatrix}
        \mat I_{n_a}\otimes \hat{\mat B} & \trans{\hat{\mat A}}\otimes \mat I_{m_b}
    \end{bmatrix},  
    $$
    where the transpose is also taken inside the blocks (i.e., is the usual transpose), and $\otimes$ is done on block-level. 
\end{definition}

\subsection{Depolarizing Noise}

For our numerical results in \cref{sec:numerical-results}, we consider the depolarizing channel, where, for each qubit, $\Xsf$, $\Ysf$, and $\Zsf$ errors happen with equal probability $\epsilon/3$, and the identity $\Isf$ happens otherwise.

Given a CSS code $\textnormal{CSS}(\mat H_0, \mat H_1)$, we will view it as a stabilizer code by turning the rows of $\mat H_0$ into $\Xsf$-type Pauli strings and the rows of $\mat H_1$ into $\Zsf$-type Pauli strings. Specifically, we replace $1$ with $\Xsf$ and $\Zsf$, respectively, and replace $0$ with $\Isf$. These Pauli strings generate a subgroup of the $n$-fold Pauli group, which naturally acts on the state space for $n$ qubits, $(\Cbb^2)^{\otimes n}$.
The stabilizer code defined by the CSS code is the fixed points of this action by the subgroup.

We will write $\langle \cdot, \cdot \rangle$ for the function that checks whether operators commute or not. Specifically, if two Pauli strings $\bm{\Fsf}_1, \bm{\Fsf}_2$ commute, we have $\langle \bm{\Fsf}_1, \bm{\Fsf}_2\rangle = 0$, and otherwise $\langle \bm{\Fsf}_1, \bm{\Fsf}_2\rangle = 1$.
Given a set of stabilizers $\{\bm{\Fsf}_i\}$ for a code and a Pauli error $\bm{\Esf}$, we get the syndrome $\vect s$ given by $s_i = \langle \bm{\Fsf}_i, \bm{\Esf}\rangle$.
When doing error correction, we want to guess the error $\bm{\Esf}$ (or an equivalent error) based on $\vect s$. 
If some of the stabilizers are grouped together, we will write $\vect s_c$ for the part of the syndrome that comes from the stabilizers in the set $c$.

\section{Cycle Structure of Quantum Tanner Codes}\label{sec:cycles-QT}

Here, we give a description of the $4$-cycles of the Tanner graph of a quantum Tanner code, with and without combined checks. The results also hold for more general quantum Tanner codes than those of \cref{def:QT-code}, in particular the different generalizations in \cite{MostadRosnesLin25_1, MostadRosnesLin25_2}.
We provide a way to construct codes where this graph will have a larger girth when combining only a constant number of checks. For this, the following definition is useful.

\begin{definition}[$2$-TNC]\label{def:TNC2}
    We say that subsets $\set A, \set B$ of a group $\Gr$ satisfy the $2$-step Total No-Conjugacy ($2$-TNC) condition if 
    $$
    agb \neq a'gb'
    $$
    for all $(a, b) \neq (a', b') \in \set A \times \set B$ and $g\in \Gr$.
\end{definition}

The $2$-TNC condition ensures that there are no parallel edges in $\Gs_0$ and $\Gs_1$.
Therefore, we will say that a generalized quantum Tanner code satisfies the $2$-TNC condition if there are no parallel edges in its $\Gs_0$ and $\Gs_1$.
By recombining the checks that we put on the vertices of $\Gs_i$ in the Tanner graph of $\Cc_i$, we get a bipartite graph with one check node for each vertex of $\Gs_i$ and one variable node for each edge of $\Gs_i$, connected by an edge whenever the variable node comes from an edge adjacent to that vertex. In other words, we have the following lemma.

\begin{lemma}\label{lemma:Tanner-graph-Gsi}
    The incidence graph $\inci(\Gs_i)$ is equal to the Tanner graph of $\code C_i$ where, for each vertex in $\Gs_i$, the $k_\textnormal{A}k_\textnormal{B}$ checks associated with that vertex are combined.
\end{lemma}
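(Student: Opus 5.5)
We argue directly from the definitions: we identify the vertex sets of the two bipartite graphs, then check that they have the same edges. This is essentially a matter of unwinding \cref{def:Tan-old}, \cref{def:QT-code} and the definition of the Tanner graph for combined checks.

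\textbf{Vertex sets.} The Tanner graph of $\code C_i$ with combined checks has one variable node per qubit and one check node per block $\Hsf_c$ of the partition $\Hscr$. Qubits are placed on squares, and by construction each square gives exactly one edge of $\Gs_i$. So the variable nodes are in natural bijection with the edges $\ee(\Gs_i)$, and hence with the $\ee$-part of $\vv_i\sqcup\ee(\Gs_i)$ in $\inci(\Gs_i)$.

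For the check nodes, we write down the parity-check matrix of $\code C_i=\Tan(\Gs_i,\cdot)$ explicitly. For $i=0$ the local code is $(\Ca\otimes\Cb)^\perp$, with parity-check matrix $\hha^\perp\otimes\hhb^\perp$. This matrix has $(\Delta-k_{\textnormal{A}})(\Delta-k_{\textnormal{B}})=k_{\textnormal{B}}k_{\textnormal{A}}$ rows, using $k_{\textnormal{A}}+k_{\textnormal{B}}=\Delta$. For $i=1$ we use $\hha\otimes\hhb$, which likewise has $k_{\textnormal{A}}k_{\textnormal{B}}$ rows after removing redundancy. Each vertex $v\in\vv_i$ contributes these $k_{\textnormal{A}}k_{\textnormal{B}}$ rows, supported on the local view $\ee(v)$ via its labeling. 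Grouping them into a single $\Hsf_v$ puts the check nodes in bijection with $\vv_i$.

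\textbf{Edge sets.} In the combined Tanner graph, $v$ is adjacent to the variable node $e$ exactly when column $e$ of $\Hsf_v$ is nonzero. Since $\Hsf_v$ is supported on $\ee(v)$, any such edge must have $e\in\ee(v)$, i.e., $e$ incident to $v$, which is precisely an edge of $\inci(\Gs_i)$.

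The converse is the main obstacle: we must show every column of $\Hsf_v$ indexed by $\ee(v)$ is actually nonzero. This holds when the local parity-check matrices have no zero columns. A zero column of $\hha^\perp\otimes\hhb^\perp$ would come from a zero column of $\hha^\perp$ or $\hhb^\perp$, meaning $\Ca$ or $\Cb$ has a coordinate that is identically zero. I would argue that this is excluded by the standing assumptions, or adopt it as one: degenerate coordinates make the construction trivial in those positions, and such codes are never used. Alternatively, one can read ``combined checks'' as placing the edge structurally for the whole local view. The same argument applies with $\hha,\hhb$ for $i=1$.

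A second point concerns parallel edges. Without the $2$-TNC condition, $\Gs_i$ may have parallel edges, but these remain distinct elements of $\ee$ and distinct qubits, so the bijections are unaffected. Self-loops do not occur, because the two $\vv_i$-vertices of a square lie in different parts $\vv_{00},\vv_{11}$ (respectively $\vv_{01},\vv_{10}$). Hence each edge is incident to exactly two distinct vertices, matching the two combined checks acting on that qubit, which completes the identification.
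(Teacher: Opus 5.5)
Your proposal is correct and takes the same route as the paper, which treats the lemma as an immediate unwinding of the definitions (one check node per vertex of $\Gs_i$, one variable node per edge/square, adjacency given by incidence). Your extra point about the reverse inclusion is a genuine hypothesis the paper leaves implicit: every local coordinate must be covered by the combined check, i.e.\ no identically-zero coordinate in $\Ca,\Cb$ for $i=0$, or in $\Ca^\perp,\Cb^\perp$ for $i=1$, and no other choice of local parity-check matrix avoids it since the row space is fixed. (A harmless slip: $\hha^\perp$ has $k_{\textnormal{A}}$ rows, so $\hha^\perp\otimes\hhb^\perp$ has $k_{\textnormal{A}}k_{\textnormal{B}}$ rows directly, and it is $\hha\otimes\hhb$ that has $(\Delta-k_{\textnormal{A}})(\Delta-k_{\textnormal{B}})$ rows; the counts agree because $k_{\textnormal{A}}+k_{\textnormal{B}}=\Delta$.)
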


When combining checks as in \cref{lemma:Tanner-graph-Gsi}, a BP$_4$ decoder will do message-passing on the combined graph $\Gs_{\textnormal{comb}} = (\vv_\textnormal{v} \sqcup \vv_\textnormal{c}, \ee)$, which can be described in the following way. 
Given a square complex $\set X$, e.g., a left-right Cayley complex, let the set of variable nodes $\vv_\textnormal{v}$ be the set of squares in $\set X$ and the set of check nodes $\vv_\textnormal{c}$ be the vertices of $\set X$. For each variable node, connect it to the four check nodes incident to that square by an edge. 
Since the square complex $\set X$ in the construction of a quantum Tanner code has a bipartition on its underlying graph (i.e., its $1$-skeleton), the check nodes are naturally split into two sets, where one is used for $\Xsf$-checks and the other for $\Zsf$-checks.

\begin{proposition}\label{prop:QT-4-cycles}
    $\Gs_{\textnormal{comb}}$ contains exactly these $4$-cycles:
    \begin{enumerate}
        \item\label{item:class1-cycles} The $4$-cycles associated to the parallel edges of $\Gs_0$ and $\Gs_1$,
        which for the codes of \cref{def:QT-code} correspond to $a,a'\in \set A, b,b'\in \set B, g\in \Gr$ such that 
        $$
        agb = a'gb'.
        $$
        \item\label{item:class2-cycles} The $4$-cycles from parallel edges of $\set X$.
        \item\label{item:class3-cycles} The $4$-cycles coming from squares of $\set X$ sharing edges.
    \end{enumerate}
\end{proposition}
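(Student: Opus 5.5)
A $4$-cycle in the bipartite graph $\Gs_{\textnormal{comb}}$ consists of two distinct check nodes $v,w\in\vv_\textnormal{c}$ and two distinct variable nodes (squares) $q,q'\in\vv_\textnormal{v}$ such that both $v$ and $w$ are vertices of both $q$ and $q'$. The plan is to classify such configurations by the relative position of $v$ and $w$ within a square. Each square of $\set X$ has four vertices: two in $\vv_0$, two in $\vv_1$, arranged so that each $\vv_0$-vertex is adjacent in the $1$-skeleton to both $\vv_1$-vertices and diagonally opposite the other $\vv_0$-vertex. Consequently, any pair of distinct vertices of a square is either a \emph{diagonal} pair (both in $\vv_0$ or both in $\vv_1$) or an \emph{adjacent} pair (one in each, joined by an edge of $\set X$ lying on the boundary of the square). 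Since $v,w$ are both vertices of $q$, we fix $q$ and split into these two cases.

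In the diagonal case, $v,w\in\vv_i$ for some $i$. Then $q$ and $q'$ are two distinct edges of $\Gs_i$ joining $v$ and $w$, that is, a pair of parallel edges of $\Gs_i$; this is class~\ref{item:class1-cycles}. Conversely, any pair of parallel edges of $\Gs_0$ or $\Gs_1$ gives such a $4$-cycle. For the explicit codes of \cref{def:QT-code} we then translate this. Take $v=(g,00)$; the squares containing $v$ are indexed by $(a,b)\in\set A\times\set B$, with diagonal vertex $(agb,11)$. Two distinct squares $(a,b)\ne(a',b')$ share the diagonal exactly when $agb=a'gb'$. The $\vv_1$ case is analogous: the squares through $(ag,01)$ are handled by reparametrizing, which yields the same equation up to relabeling via $\set A^{-1}=\set A$ and $\set B^{-1}=\set B$.

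In the adjacent case, $v\in\vv_0$ and $w\in\vv_1$ are joined in $q$ by a boundary edge $e$ of $q$. Let $e'$ be the corresponding edge between $v$ and $w$ on the boundary of $q'$. If $e'\ne e$, we have two parallel edges of $\set X$, which is class~\ref{item:class2-cycles}. If $e'=e$, the distinct squares $q,q'$ share the edge $e$, which is class~\ref{item:class3-cycles}. Conversely, two squares sharing an edge, or containing respective parallel edges between the same pair of vertices, produce a $4$-cycle through those two endpoints.

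The main subtlety I expect is making the case split exhaustive and well defined, and there are two points to check. First, I must verify that every pair of vertices of a square is either diagonal or joined by a boundary edge of that square, which follows from the bipartite structure of the $1$-skeleton. Second, the classes may overlap: the same pair $\{q,q'\}$ could meet in both a diagonal pair and an adjacent pair, giving several $4$-cycles. This does not matter, since a $4$-cycle is determined by $(v,w,q,q')$ and each such tuple falls into exactly one case according to the parity of $v,w$. Finally, I will note that in a left-right Cayley complex each square has four distinct vertices, so no degenerate squares arise, which keeps the analysis valid for the more general quantum Tanner codes referenced in the statement.
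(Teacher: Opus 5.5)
Your proof is correct and follows essentially the same route as the paper: a $4$-cycle is two squares sharing two vertices, which are either both in $\vv_i$ (parallel edges of $\Gs_i$, class~1) or one in each part, in which case the two squares' boundary edges joining those vertices either coincide (class~3) or are parallel edges of $\set X$ (class~2). Your split on whether those particular edges coincide is slightly sharper than the paper's split on whether the squares share any edge at all. Your explicit derivation of $agb=a'gb'$ fills in a translation that the paper only asserts.
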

\begin{IEEEproof}
    The $4$-cycles in $\Gs_{\textnormal{comb}}$ correspond to two squares $q, p$ in the square complex $\set X$ sharing (at least) two vertices $v, w\in \vv_0 \sqcup \vv_1$. When $v, w \in \vv_0$, this corresponds to a pair of parallel edges in $\Gs_0$, and similarly for $\vv_1$ and $\Gs_1$, accounting for the cycles or \cref{item:class1-cycles}. The only other option is to have one vertex from each set, say $v \in \vv_0$ and $w \in \vv_1$. This can happen in only two ways: either $q$ and $p$ share a common edge, or they do not. If they do not share a common edge, $v$ and $w$ have a pair of parallel edges between them, 
    so the $4$-cycle is of the type of \cref{item:class2-cycles}. In the other case, they are incident to the same edge between $v$ and $w$, and we are in \cref{item:class3-cycles}.
\end{IEEEproof}

It is worth noting that each parallel edge of $\Ga$ and $\Gb$ leads to $\Delta$ parallel edges in $\Gs_0$ and $\Gs_1$, and to $2\Delta$ (assuming \cref{def:QT-code}) parallel edges in $\set X$, where these are all the parallel edges there are. For the quadripartite quantum Tanner codes of \cref{def:QT-code}, all parallel edges of $\Ga$ and $\Gb$ come from repeated elements of $\set A$ and $\set B$, respectively.

\cref{prop:QT-4-cycles} describes the $4$-cycles in the Tanner graph where the checks coming from the same vertex of $\set X$ are combined, which our proposed decoder can use for BP decoding. When checks are not combined, the additional checks introduced by the local code must also be considered. Recall that the codes $\Ca^\perp, \Cb^\perp$ play an equivalent role to that of $\Ca, \Cb$ in the construction of a quantum Tanner code, so the local codes cannot all be good LDPC codes with Tanner graphs of large girth.

\begin{proposition}
    In the Tanner graph of $\Cscr_i$, where $\Hsf_i^{\textnormal{loc}}$ is used as a parity-check matrix for the local code, we find the following $4$-cycles.
    \begin{enumerate}
        \item[(i)] The $4$-cycles of $\Hsf_i^{\textnormal{loc}}$ appear $|\vv_i|$ times.
        \item[(ii)] Each pair of parallel edges of $\Gs_i$ labeled $(a_1,b_1)$ and $(a_2,b_2)$ in the same local view lead to 
        \begin{align*}
            &\left((\mat H_i^{\textnormal{loc}})_{a_1\Delta + b_1, :} \cdot (\mat H_i^{\textnormal{loc}})_{a_2\Delta + b_2, :}\right)
            \\
            &
            \cdot \left((\mat H_i^{\textnormal{loc}})_{a_1^{-1}\Delta + b_1^{-1}, :} \cdot (\mat H_i^{\textnormal{loc}})_{a_2^{-1}\Delta + b_2^{-1}, :}\right)
        \end{align*}
        $4$-cycles.
    \end{enumerate}
    In the Tanner graph for CSS$(\Cscr_0, \Cscr_1)$, we find the following $4$-cycles between the $\Xsf$- and $\Zsf$-checks.
    \begin{enumerate}
        \item[(iii)] Each $4$-cycle in $\Gs_{\textnormal{comb}}$ of edges labeled $(a_1,b_1), (a_2, b_2)$ and $(a_1^{-1}, b_1^{-1}), (a_2^{-1}, b_2^{-1})$ at a check-node of $\Cscr_0$ and $\Cscr_1$, respectively, lead to 
        \begin{align*}
            &\left((\mat H_0^{\textnormal{loc}})_{a_1\Delta + b_1, :} \cdot (\mat H_0^{\textnormal{loc}})_{a_2\Delta + b_2, :}\right)
            \\
            &
            \cdot \left((\mat H_1^{\textnormal{loc}})_{a_1^{-1}\Delta + b_1^{-1}, :} \cdot (\mat H_1^{\textnormal{loc}})_{a_2^{-1}\Delta + b_2^{-1}, :}\right)
        \end{align*}
        $4$-cycles.
        Furthermore, if either $a_1 = a_2$ or $b_1 = b_2$, this is a $4$-cycle of type $3$ of \cref{prop:QT-4-cycles}; otherwise it is of type $2$.

    \end{enumerate}
\end{proposition}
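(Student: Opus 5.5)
The plan is to enumerate $4$-cycles in the uncombined Tanner graph by projecting them onto the combined graph $\Gs_{\textnormal{comb}}$ of \cref{lemma:Tanner-graph-Gsi}. A $4$-cycle in the Tanner graph of $\Cscr_i$, or of CSS$(\Cscr_0,\Cscr_1)$, consists of two check rows $r, r'$ and two variable nodes (squares) $q, p$ with all four incidences nonzero. Each check row belongs to a unique vertex of the square complex, namely the vertex whose local code contributed it. The quotient map to $\Gs_{\textnormal{comb}}$ therefore sends $r, r'$ to vertices $v, w$, and the cycle either collapses ($v=w$) or maps to a $4$-cycle through $v, w, q, p$. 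The proof splits into these cases.

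\textbf{Case $v=w$.} Both rows are local checks at the same vertex, and $q, p$ are two squares in its local view. If $q\ne p$ as edges of the local view, this is exactly a $4$-cycle of the local matrix $\Hsf_i^{\textnormal{loc}}$, placed at vertex $v$. Conversely, every such local $4$-cycle lifts at every vertex of $\vv_i$. Local views are in bijection with $\set A\times\set B$ via the labels, and this bijection is injective under $2$-TNC, or in general when there are no parallel edges. This gives (i), with multiplicity $|\vv_i|$.

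\textbf{Case $v\ne w$, same side.} Now $v, w\in\vv_i$, so $q, p$ are two parallel edges of $\Gs_i$ joining $v$ and $w$ (\cref{prop:QT-4-cycles}, type 1). At $v$ they carry labels $(a_1,b_1)$ and $(a_2,b_2)$. The key observation is that at the other endpoint $w$ the same squares carry the inverse labels $(a_1^{-1},b_1^{-1})$ and $(a_2^{-1},b_2^{-1})$. This follows from the square $[(g,00),(ag,01),(gb,10),(agb,11)]$: seen from $(agb,11)$, the square is reached by $a^{-1}$ and $b^{-1}$. A $4$-cycle then needs a row at $v$ covering both columns $a_1\Delta+b_1$ and $a_2\Delta+b_2$, and a row at $w$ covering both inverse columns. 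The number of choices is the product of the two counts of such rows. I would interpret the displayed ``$\cdot$'' expressions as exactly these counts, i.e.\ the number of rows of $\Hsf_i^{\textnormal{loc}}$ supported on both columns, and state that reading explicitly. This gives (ii).

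\textbf{Case $v\ne w$, opposite sides.} Here $v\in\vv_0$ carries an $\Xsf$-row and $w\in\vv_1$ a $\Zsf$-row, so the cycle lies in the joint Tanner graph and projects to a type 2 or type 3 cycle of \cref{prop:QT-4-cycles}. Counting works exactly as before, with $\Hsf_0^{\textnormal{loc}}$ at $v$ and $\Hsf_1^{\textnormal{loc}}$ at $w$, which gives (iii). For the final claim: $v$ and $w$ are adjacent in $\set X$ via an $\set A$-edge or a $\set B$-edge. The squares incident to both are those containing an edge $v$--$w$. Two such squares share the same edge exactly when they share the corresponding generator, i.e.\ $a_1=a_2$ for a $\set B$-edge pairing, or $b_1=b_2$ otherwise. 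That is type 3; otherwise they use distinct parallel edges, which is type 2.

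The main obstacle is the label bookkeeping. I need to verify the inverse-label claim carefully for each of the four vertex types $\vv_{ij}$, and to fix the row-indexing convention so the ``$\cdot$'' products count common rows correctly. I would check the inverse-label rule first on $\vv_{00}$ versus $\vv_{11}$, then on $\vv_{00}$ versus $\vv_{01}$, and derive the rest by symmetry.
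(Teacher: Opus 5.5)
Your proposal is correct and follows essentially the paper's own proof: the same three-way split (both rows at one vertex, at two vertices of $\vv_i$ joined by parallel edges of $\Gs_i$, or at one vertex on each side), and the same count as the product of the numbers of local rows covering both label columns at each endpoint, which is how the paper splits the column inner product into its two summands and takes one row from each. There are only three minor slips. First, the labeling of a local view by $\set A\times\set B$ is always bijective, since local views consist of edges (squares) and not neighbouring vertices, so (i) needs no $2$-TNC hypothesis. Second, in the last claim the shared-edge condition is $a_1=a_2$ for an $\set A$-edge pairing and $b_1=b_2$ for a $\set B$-edge pairing, which you have swapped. Third, between $\vv_{00}$ and $\vv_{01}$ you will find only the $\set A$-coordinate inverted; this is the paper's ``possibly swapping $a$ and $a^{-1}$ or $b$ and $b^{-1}$'' caveat and does not affect the product count.
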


For ease of notation, we stick to the quadripartite construction in \cref{def:QT-code} for the following proof.

\begin{IEEEproof}
    Each variable node corresponds to a square $q$ with four vertices, as depicted below
    for the quadripartite construction:
    $$
    \begin{tikzcd}[column sep=40, row sep=30]
        (gb, 10) \ar[r, "\ a \quad a^{-1}", no head] \ar[d, swap, "b^{-1}"{near start}, "b"{near end}, no head]
        & (a g b, 11) \ar[d, "b^{-1}"{near start}, "b"{near end}, no head]
        \\
        (g, 00) \ar[r, "\ a \quad a^{-1}", no head]
        \ar[ur, "q"{description}, phantom]
        & (ag, 01)
    \end{tikzcd}
    $$
    A $4$-cycle must include two variable nodes, say coming from squares $q_1$ and $q_2$ as above, with $a,g,b$ indexed by $1$ and $2$. For the $q_1$ and $q_2$ to be part of a $4$-cycle, they must share at least one vertex. Any $4$-cycle will include two check nodes, possibly equal. For the single vertex case, each vertex of $\vv_i$ has the restriction of $\mat H_i^{\textnormal{loc}}$ placed on its local view, duplicating each vertex of $\mat H_i^{\textnormal{loc}}$ $|\vv_i|$ times. 
    
    For the $4$-cycles including two different check nodes $c_1\in \vv_{i_1}$ and $c_2\in \vv_{i_2}$, we either have $i_1 = i_2$ or $i_1 \neq i_2$. In the first case, $q_1$ and $q_2$  both correspond to edges between $c_1$ and $c_2$ in $\Gs_{i_1}$, that is, a pair of parallel edges in $\Gs_{i_1}$. If $i_1 = i_2 = 0$, the columns of $\mat H_0$ corresponding to $q_1$ will contain the columns $(\mat H_0^{\textnormal{loc}})_{a_1\Delta + b_1, :} $
    and $(\mat H_0^{\textnormal{loc}})_{a_1^{-1}\Delta + b_1^{-1}, :}$, and otherwise be zero. The column corresponding to $q_2$ similarly contains
    $(\mat H_0^{\textnormal{loc}})_{a_2\Delta + b_2, :}$ and $(\mat H_0^{\textnormal{loc}})_{a_2^{-1}\Delta + b_2^{-1}, :}$, lining up in such a way that the inner product between the columns is precisely 
    \begin{align*}
        &\left((\mat H_0^{\textnormal{loc}})_{a_1\Delta + b_1, :} \cdot (\mat H_0^{\textnormal{loc}})_{a_2\Delta + b_2, :}\right)
        \\
        &
        + \left((\mat H_0^{\textnormal{loc}})_{a_1^{-1}\Delta + b_1^{-1}, :} \cdot (\mat H_0^{\textnormal{loc}})_{a_2^{-1}\Delta + b_2^{-1}, :}\right).
    \end{align*}
    The number of $4$-cycles between these two columns of $\mat H_0$ is the number of ways to choose two from this sum of inner products. Among these, the number of ways to choose one from the first part of the sum and one from the second part of the sum gives us the desired number of $4$-cycles.

    When the two vertices $c_1, c_2$ are not both part of $\vv_0$, the situation is the same (possibly swapping $a$ and $a^{-1}$ or $b$ and $b^{-1}$), apart from the matrix $\mat H_1^\textnormal{loc}$ applying instead of $\mat H_0^\textnormal{loc}$ for vertices in $\vv_1$. 
\end{IEEEproof}

The following corollary is immediate from \cref{prop:QT-4-cycles}.

\begin{corollary}\label{cor:QT-4-cycles-Ci}
    If $\Gs_i$ has no parallel edges (satisfies $2$-TNC) and the parity-check matrix for the local code contains $t$ $4$-cycles, then the Tanner graph of $\Cscr_i$ contains $t|\vv_i|$ $4$-cycles.  

    When combining sets of $k_\textnormal{A}k_{\textnormal{B}}$ checks the natural way, the Tanner graph for $\Cscr_i$ has no $4$-cycles. For quadripartite quantum Tanner codes (\cref{def:QT-code}), the girth of the graph is $8$.
\end{corollary}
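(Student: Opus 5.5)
We prove the three assertions of \cref{cor:QT-4-cycles-Ci} in order, reading them off the preceding proposition (items (i)--(ii)) and \cref{lemma:Tanner-graph-Gsi}, and then argue separately about $6$-cycles to obtain girth $8$.

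\emph{Counting $4$-cycles.} In the Tanner graph of $\Cscr_i$ alone, only check nodes of $\vv_i$ occur. A $4$-cycle uses two variable nodes (squares $q_1,q_2$) and two check rows, and these rows come either from the same vertex of $\vv_i$ or from two distinct vertices of $\vv_i$. If they come from distinct vertices $c_1 \neq c_2$, then both $q_1$ and $q_2$ are edges of $\Gs_i$ joining $c_1$ and $c_2$, i.e., parallel edges. The $2$-TNC hypothesis excludes this, so item (ii) contributes nothing. Only item (i) remains: each of the $t$ $4$-cycles of $\Hsf_i^{\textnormal{loc}}$ appears once per vertex, giving $t|\vv_i|$. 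Distinct vertices give distinct check-node sets, so there is no double counting.

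\emph{No $4$-cycles after combining.} By \cref{lemma:Tanner-graph-Gsi}, combining the $k_\textnormal{A}k_\textnormal{B}$ checks at each vertex turns the Tanner graph into $\inci(\Gs_i)$. A $4$-cycle in an incidence graph $v,e,w,e'$ with $e\neq e'$ requires $v\neq w$, because $e$ is not a self-loop and its two endpoints are distinct. It follows that $e$ and $e'$ are parallel edges between $v$ and $w$, which is excluded. This is exactly class~\ref{item:class1-cycles} of \cref{prop:QT-4-cycles} restricted to one side.

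\emph{Girth exactly $8$.} Cycles in $\inci(\Gs_i)$ have length $2\ell$, where $\ell$ is the length of the corresponding cycle in $\Gs_i$. We therefore need to show that $\Gs_i$ has no triangles but does contain a $4$-cycle. I expect this to be the main obstacle.

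For triangles, note that $\Gs_0$ connects $(g,00)$ to $(agb,11)$ via the square with data $(a,g,b)$. So every edge of $\Gs_0$ goes between $\vv_{00}$ and $\vv_{11}$, which makes $\Gs_0$ bipartite; the same holds for $\Gs_1$ between $\vv_{01}$ and $\vv_{10}$. Hence there are no odd cycles, and in particular no triangles, so the girth is at least $8$.

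For a $4$-cycle in $\Gs_0$, take $a\neq a'$ in $\set A$ and $b\neq b'$ in $\set B$ and consider
$$(g,00)\to(agb,11)\to(agb\,b^{-1}b',\,\cdot)\ \ldots$$
It is cleaner to use the closed walk $g \to agb \to a'^{-1}(agb)b'^{-1}\cdots$. Concretely, take the commuting left and right actions
$$g,\quad agb,\quad a'^{-1}agbb'^{-1},\quad \ldots,$$
and use the vertices $(g,00)$, $(agb,11)$, $(a'agb',00)$, and $(a'gb',11)$. The first and second are joined by the edge with data $(a,g,b)$. The second and third are joined by the edge with data $(a',agb,b')$, since we go from $agb\in\vv_{11}$ to $a'(agb)b'\in\vv_{00}$ (edges are symmetric because $\set A=\set A^{-1}$ and $\set B=\set B^{-1}$). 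Likewise, the first and fourth are joined via $(a',g,b')$, and the fourth and third via $(a,a'gb',b)$, using $a\,a'gb'\,b$. Closing the cycle requires $a'agb'=aa'gb'$ and $bb'$ versus $b'b$ to match up; since left and right multiplication commute, the correct closing choice is the pair $(a'a)g(bb')$ in one direction and $(a a')g(b'b)$ in the other, so I will arrange the labels so that left and right products agree.

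The remaining work is to check that the four edges are distinct. Here I would invoke $2$-TNC, together with $|\set A|,|\set B|\ge 2$, to rule out coincidences. This gives a cycle of length $8$ in $\inci(\Gs_i)$, and therefore girth exactly $8$.
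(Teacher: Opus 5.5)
Your handling of the first two assertions matches the paper, which calls the corollary immediate from the propositions. Without parallel edges, item (ii) contributes nothing and item (i) gives $t|\vv_i|$. By \cref{lemma:Tanner-graph-Gsi}, a $4$-cycle in $\inci(\Gs_i)$ is exactly a pair of parallel edges. Your bipartiteness observation is also correct: $\Gs_0$ lies between $\vv_{00}$ and $\vv_{11}$, and $\Gs_1$ between $\vv_{01}$ and $\vv_{10}$. This rules out $6$-cycles, so the girth is at least $8$.

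\textbf{The gap: the upper bound.} For girth exactly $8$ you must exhibit a $4$-cycle in $\Gs_i$, and your construction does not produce one.

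\emph{Why your walk fails.} From $(agb,11)$ via $(a',b')$ you reach $(a'agbb',00)$. From $(a'gb',11)$ via $(a,b)$ you reach $(aa'gb'b,00)$. These coincide only if $a'agbb'=aa'gb'b$, which fails for generic choices in a non-abelian $\Gr$. The identity $(xg)y=x(gy)$ says nothing about $aa'$ versus $a'a$ or $bb'$ versus $b'b$, so ``arranging the labels so that left and right products agree'' is not an argument. Separately, your label $(a',agb,b')$ for the second edge is off: that square has $agb$ as its $00$-vertex, not its $11$-vertex.

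\emph{The missing idea: use the local view of a vertex in the other part.}
\begin{itemize}
\item Fix $(h,10)\in\vv_1$. The squares through it are $(a,hb^{-1},b)$ for $(a,b)\in\mathcal{A}\times\mathcal{B}$, and each one's $\Gs_0$-edge joins $(hb^{-1},00)$ to $(ah,11)$.
\item The maps $b\mapsto hb^{-1}$ and $a\mapsto ah$ are injective, so these $\Delta^2$ edges form a complete bipartite $K_{\Delta,\Delta}$ inside $\Gs_0$.
\item For $a\neq a'$ and $b\neq b'$, the vertices $(hb^{-1},00),(ah,11),(hb'^{-1},00),(a'h,11)$ are distinct. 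They are joined by the squares $(a,hb^{-1},b)$, $(a,hb'^{-1},b')$, $(a',hb'^{-1},b')$, $(a',hb^{-1},b)$. Distinct vertices force distinct edges, so this is a $4$-cycle and no appeal to $2$-TNC is needed.
\item Symmetrically, the squares through $(g,00)$ form a $K_{\Delta,\Delta}$ in $\Gs_1$ between $(ag,01)$ and $(gb,10)$.
\end{itemize}
Hence $\inci(\Gs_i)$ contains an $8$-cycle whenever $\Delta\geq 2$. Together with your bipartiteness bound, this gives girth exactly $8$.
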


In \cite{MostadRosnesLin25_2}, a version of quantum Tanner codes was considered, where the product code in the definition for a Tanner code is swapped for a bipartite graph of girth $t$.
Similarly to \cref{cor:QT-4-cycles-Ci}, these codes can be made so that the Tanner graph of $\code C_i$ when combining checks has girth $2t$.

We may reasonably expect to see some improvement in decoding performance by combining only some of the checks from each vertex. In total, there are $k_{\textnormal{A}}k_{\textnormal{B}}$ checks for each vertex, so for an even split, we could combine any number dividing $k_{\textnormal{A}}k_{\textnormal{B}}$.
In particular, if $\mat H_\Atn$ and $\mat H_\Btn^\perp$ have some wanted property, then combining the $k_\Btn$ checks in $\mat H_\Atn\otimes\mat H_\Btn$ and $\mat H_\Atn^\perp\otimes\mat H_\Btn^\perp$ corresponding to the same rows of $\mat H_\Atn$ and $\mat H_\Btn^\perp$, respectively, might be a good middle ground. In that case, the proposed decoder would locally decode on $\mat H_{\textnormal{A}}\otimes [1\dots 1]$ and $[1\dots 1]\otimes \mat H_{\textnormal{B}}$. One could potentially decode the dual code locally, as well.
For ease of later reference, we sum up some of the more natural strategies for grouping checks in the following remark,
where $\Cc_{\textnormal{SPC}}$ is the single parity-check code of length $\Delta$, with parity-check matrix $[1\dots 1]$.

\begin{remark}\label{rem:ways-to-group}
    Some natural ways to group the checks of a quantum Tanner code are:
    \begin{enumerate}
        \item Combine the $k_\textnormal{A}k_\textnormal{B}$ checks coming from the same vertex of $\Gs_0$ and $\Gs_1$ (the full grouping).
        \item Combine $k_\textnormal{A}$ or $k_\textnormal{B}$ checks to obtain Tanner graphs for $\Ca^\perp\otimes \Cc_{\textnormal{SPC}}$ and $\Cc_{\textnormal{SPC}}\otimes \Cb$ or $\Ca\otimes \Cc_{\textnormal{SPC}}$ and $\Cc_{\textnormal{SPC}}\otimes \Cb^\perp$, respectively, and use this for each local view of respectively $\Gs_0$ and $\Gs_1$ (the partial grouping).
        \item Apply an algorithm on $\mat H_0^{\textnormal{loc}}$ and $\mat H_1^{\textnormal{loc}}$ for combining checks that tries to minimize the size of the link of the check nodes, and use this for each local view of $\Gs_0$ and $\Gs_1$, respectively (an unstructured local grouping).
        \item Apply an algorithm on $\mat H_0$ and $\mat H_1$ for combining checks that tries to minimize the size of the link of the check nodes (an unstructured grouping).
    \end{enumerate}
\end{remark}

We will see later that strategies 1) and 2) of \cref{rem:ways-to-group} perform well in practice, and that a fairly simple greedy algorithm (Algorithm \ref{alg:find-grouping-greedy}) in some cases can be used to beat strategy 2) using strategy 3), see \cref{sec:numerical-results} and \cref{fig:logical_error_rate_432_different_level}. We have not so far observed strategy 4) beating strategy 1).

\section{Cycle Structure of HGP and LP Codes}

We will describe (some of) the $4$-cycles of the Tanner graphs of HGP and LP codes, and how the proposed decoder can get rid of some of them. However, as shown in \cref{sec:numerical-results}, the performance gain may be limited. This is likely because the construction allows for codes with very few $4$-cycles to begin with, even none at all; see, for example, the codes C1 and C2 of~\cite{PanteleevKalachev21_1}.

Recall that HGP codes have parity-check matrices
$$
\mat H_{0} \eqdef 
\begin{bmatrix}
    \mat A\otimes \mat I_{n_b} & \mat I_{m_a} \otimes \trans{\mat B}
\end{bmatrix},
\quad 
\mat H_{1} \eqdef
\begin{bmatrix}
    \mat I_{n_a}\otimes \mat B & \trans{\mat A}\otimes \mat I_{m_b}
\end{bmatrix}.  
$$
In $\Hsf_0$, the $4$-cycles of $\Asf$ and $\Bsf$ show up $n_b$ and $m_a$ times, respectively, and in $\Hsf_1$, they show up $n_a$ and $m_b$ times, respectively. Other than that, there are no more $4$-cycles, as the inner product between columns of $\mat A\otimes \mat I_{n_b}$ and columns of $\mat I_{m_a} \otimes \trans{\mat B}$ is at most $1$, and the same holds for the blocks of $\Hsf_1$. This implies the following proposition.

\begin{proposition}
    Given matrices $\mat A \in \Field_2^{m_a\times n_a}$ and $\mat B \in \Field_2^{m_b\times n_b}$ with $t_a$ and $t_b$ $4$-cycles,  respectively, then the Tanner graphs of $\mat H_0$ and $\mat H_1$ for $\textnormal{HGP}(\mat A, \mat B)$ contain $t_a n_b + t_b m_a$ and $t_b n_a + t_a m_b$ $4$-cycles, respectively.

    Combining the checks coming from $\mat A$ ($\mat B$), the resulting Tanner graphs contain $t_b$ ($t_a$) $4$-cycles.
\end{proposition}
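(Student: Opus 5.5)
Counting $4$-cycles amounts to counting pairs of columns together with pairs of rows at which both columns have a $1$. So for every pair of columns I will compute their inner product $s$ (the number of shared rows) and add up $\binom{s}{2}$. I treat $\mat H_0=[\mat A\otimes \mat I_{n_b} \mid \mat I_{m_a}\otimes\trans{\mat B}]$ in detail; the claim for $\mat H_1$ follows by the same argument with the roles of $\mat A$ and $\mat B$ and of $n$ and $m$ exchanged.

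\emph{Pairs of columns inside $\mat A\otimes \mat I_{n_b}$.} Index a column by $(j,\beta)$ with $j\in[n_a]$ and $\beta\in[n_b]$. This column is $\mat A_{:,j}\otimes \vect e_\beta$. Two such columns therefore have inner product $(\mat A_{:,j}\cdot\mat A_{:,j'})\,\delta_{\beta\beta'}$. Summing $\binom{\cdot}{2}$ over these pairs gives $n_b t_a$.

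\emph{Pairs of columns inside $\mat I_{m_a}\otimes\trans{\mat B}$.} Index a column by $(\alpha,i)$ with $\alpha\in[m_a]$ and $i\in[m_b]$. It is $\vect e_\alpha\otimes \mat B_{i,:}^{\mathsf T}$. Two such columns have inner product $\delta_{\alpha\alpha'}(\mat B_{i,:}\cdot \mat B_{i',:})$. The $4$-cycles of $\mat B$ are counted equally well by rows, so this contributes $m_a t_b$.

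\emph{Mixed pairs.} A column $(j,\beta)$ of the first block and a column $(\alpha,i)$ of the second block meet only at row $(\alpha,\beta)$. Their inner product is $A_{\alpha j}B_{i\beta}\le 1$, so they lie on no $4$-cycle. Adding the three cases gives $t_an_b+t_bm_a$.

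\emph{Combined checks.} Group the rows of $\mat H_0$ by the copy of $\mat A$ they come from, i.e., for each fixed $\beta$ combine the rows $\{(\alpha,\beta)\}_\alpha$. (In $\mat H_1$ the analogous grouping is by the copy of $\trans{\mat A}$, over fixed $\alpha$.) In the quotient Tanner graph, two variable nodes share a check node $c$ exactly when the corresponding column restrictions to the block $\mat H_c$ are both nonzero. In $\mat H_0$, the grouped check $\beta$ touches exactly the first-block columns $(j,\beta)$ with $\mat A_{:,j}\neq 0$ and the second-block columns $(\alpha,i)$ with $B_{i\beta}=1$.

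I then count pairs of check nodes shared by two variable nodes:
\begin{itemize}
\item A first-block column touches only the single combined check $\beta$. It is therefore never on a $4$-cycle through two combined checks, and pairs of first-block columns contribute nothing.
\item A mixed pair shares only the check $\beta$, so it also contributes nothing.
\item A second-block column $(\alpha,i)$ touches the checks $\{\beta : B_{i\beta}=1\}$. Two such columns with different $\alpha$ share no checks, because the combined checks are indexed by $\beta$ alone; the check identifies $\beta$, and first-block structure is gone. This is the step I have to verify carefully.
\end{itemize}

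The main obstacle sits exactly there. With this indexing, columns $(\alpha,i)$ and $(\alpha',i')$ with $\alpha\neq\alpha'$ seem to share every $\beta$ in the supports of both rows of $\mat B$, and that would inflate the count. So I must pin down the grouping convention so that the quotient keeps $\alpha$ separate. Each group should consist of the rows affected by a single copy of $\mat A$ within one $\alpha$-block of the $\mat I_{m_a}$ factor, matching the paper's phrase ``checks coming from $\mat A$.'' Under that convention, the surviving $4$-cycles are exactly pairs of columns of $\trans{\mat B}$ sharing two rows within a single copy. I expect this to reproduce $t_b$ for $\mat H_0$, and symmetrically $t_a$ when combining the checks coming from $\mat B$. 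This reconciliation of the grouping with the stated count is the delicate point I would settle first, possibly by working a small example.
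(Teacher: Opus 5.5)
Your count for the uncombined matrices is correct and follows the paper's own argument. The $4$-cycles of $\mat A$ and of $\mat B$ are repeated once per copy ($n_b$ and $m_a$ times in $\Hsf_0$), and a mixed pair of columns has inner product $A_{\alpha j}B_{i\beta}\le 1$. Your $\binom{s}{2}$ bookkeeping only makes this explicit.

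The gap is the second claim, which you leave open. The convention you propose to close it is vacuous. A copy of $\mat A$ in $\mat A\otimes\mat I_{n_b}$ occupies the rows $\{(\alpha,\beta)\}_{\alpha}$ for a fixed $\beta$. An $\alpha$-block of $\mat I_{m_a}\otimes\trans{\mat B}$ occupies $\{(\alpha,\beta)\}_{\beta}$ for a fixed $\alpha$. Their intersection is the single row $(\alpha,\beta)$, so nothing is combined and the count stays at $n_bt_a+m_at_b$, not $t_b$.

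The obstruction you noticed is real, not an indexing artifact. With one combined check per $\beta$, each first-block column meets exactly one check. So the $\mat A$-cycles disappear and no mixed cycles arise. However, $(\alpha,i)$ and $(\alpha',i')$ share the checks $\mathrm{supp}(\mat B_{i,:})\cap\mathrm{supp}(\mat B_{i',:})$ for all $\alpha,\alpha'$. Hence the quotient Tanner graph of $\Hsf_0$ has exactly
\[
m_a^2\,t_b+\binom{m_a}{2}\sum_{i\in[m_b]}\binom{|\mat B_{i,:}|}{2}
\]
$4$-cycles. The second term comes from pairs with $\alpha\neq\alpha'$ and $i=i'$, which are newly created cycles. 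This equals $t_b$ only when $m_a=1$. For instance, $\mat A=\left[\begin{smallmatrix}1\\1\end{smallmatrix}\right]$ and $\mat B=\left[\begin{smallmatrix}1&1\\1&1\end{smallmatrix}\right]$ give $6$ four-cycles in the combined graph of $\Hsf_0$, not $t_b=1$.

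The same computation gives $n_a^2t_b+\binom{n_a}{2}\sum_{\beta}\binom{|\mat B_{:,\beta}|}{2}$ in $\Hsf_1$. Combining instead the $n_b$-row groups coming from $\mat B$ in $\Hsf_0$ gives $n_b^2t_a+\binom{n_b}{2}\sum_{j}\binom{|\mat A_{:,j}|}{2}$.

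The paper justifies only the first claim (repetition inside blocks, mixed inner products at most $1$) and merely asserts the second. So no choice of grouping convention will deliver the literal values $t_b$ ($t_a$). What your method actually proves is this: combining the copies of one factor removes that factor's $4$-cycles and creates no mixed ones, while the other factor's cycles survive with the multiplicity displayed above.
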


In a scenario where $\mat B$ has many $4$-cycles, we can get rid of them by combining checks according to collections of $n_b$ checks in $\Hsf_0$ and $m_b$ checks in $\Hsf_1$. Furthermore, if $\mat B$ is a matrix that locally has many $4$-cycles, we can get rid of them using smaller combined checks. Similarly for $\mat A$. This might allow some freedom in the construction: one matrix can be chosen among matrices with ``nice'' structure, and the other among matrices that ``fit together well'' with the first, without the same requirement on its structure. Of course, they should both preferably be LDPC and define codes with large minimum distances.

As LP codes are lifted versions of HGP codes in the same sense as quadripartite quantum Tanner codes are lifted versions of dual product codes, the above analysis can be combined with the discussion in \cref{sec:cycles-QT} to give a description of $4$-cycles in LP codes, and how our proposed decoder can deal with them. The utility of this depends on how useful the added freedom in the choice of $\mat A, \mat B$ is, at the cost of more complex decoding. In this regard, one should note that the pair $\mat A, \mat B$ probably needs to be robust in the sense of product expansion for the LP code to have a good minimum distance.

\section{Proposed Decoder}

In this section, we outline the proposed iterative BP decoder. The main idea is to combine simple checks into more powerful checks  and then process these using a trellis-based MAP decoder as part of the check node processing of each
decoding iteration on the Tanner graph of the qLDPC code. A similar approach for improved decoding of binary linear codes was proposed in \cite{Rosnes2011}.

For some code families, such as quantum Tanner codes, HGP codes, and LP codes, there are inherent structures that give natural ways to combine checks for generalized decoding. See \cref{rem:ways-to-group} for a discussion of how this can be done for quantum Tanner codes. There might be less apparent structures in the Tanner graph that are worse for the decoder than the obvious ones, and some code families might not have an obvious way to combine checks, like GB codes, so we give a greedy algorithm in Algorithm \ref{alg:find-grouping-greedy} for combining checks according to strategy 4) of  \cref{rem:ways-to-group}.

Next, we describe MAP decoding of the generalized checks, and then give the pseudo-code of the proposed iterative BP decoder.

\begin{algorithm}[t!]
\label{alg:find-grouping-greedy}
    \caption{Check node combining}
    \KwData{An $m\times n$ parity-check matrix $\mat{H}_i$ and maximum size of blocks $r$.}
    \KwResult{A collection $\mathscr{H}$ of blocks of rows from $\mat H_i$ where each block contains at most $r$ elements, 
    and a Tanner graph $\G_i^{(r)}$. 
    }

     $\tilde{\mat H} \leftarrow \mat H_i$,
    $\mathscr{H} \leftarrow \emptyset$,
    and $a\leftarrow 0$

    $\Bcal \leftarrow \{r, \dots , r, r-1, \dots, r-1\}$ such that $\sum_{b\in \Bcal} b = m$\\
    $\tilde{m} \leftarrow m$

    \For{$b \in \Bcal$}{
        Set $a$ to a random number in $[\tilde{m}]$
        
        Set $\vect g$ to row number $a$ of $\tilde{\mat H}$
    
        Set $\hat{\mat H} \leftarrow \vect g$
    
        Delete row $a$ of $\tilde{\mat H}$,

        \For{$l \in [b-1]$}{
            {%
            
            Compute $\tilde{\mat H} \trans{\vect g}$ and 
            find the index of the maximum value, arbitrarily breaking ties

            Remove the row $\vect r$ of that index from $\tilde{\mat H}$
        
            $\hat{\mat H} \leftarrow \left[\begin{smallmatrix}
                \hat{\mat H} \\ \vect r
            \end{smallmatrix}\right]$
        
            $\vect g \leftarrow \vect g {\lor} \vect r$
            }
        }
        Set $\tilde{m}$ to the number of rows of $\tilde{\mat H}$\\
        $\mathscr{H} \leftarrow \mathscr{H} \sqcup \hat{\mat H}$

    }

    Construct the Tanner graph $\G_i^{(r)}$ from $\mathscr{H}$, see \cref{sec:graph-prelim}.
    
    \textbf{return $\mathscr{H}, \G_i^{(r)}$} 
\end{algorithm}

\begin{algorithm}[t!]
\label{alg:SISO}
    \caption{Soft-input soft-output (\texttt{SISO})}
    \KwData{Local parity-check matrix $\mat{H}_c$, the corresponding syndrome vector $\bm s_c$,  the corresponding (ordered) messages from neighboring variables nodes $(\Gamma_0,\ldots,\Gamma_{n_c-1})$, and the corresponding coset leader $\bm u_c =\bm u_c(\bm s_c)$.}
        \KwResult{Set of outgoing (ordered) messages to neighboring variable nodes $(\Delta_0,\ldots,\Delta_{n_c-1})$}

Construct a trellis $\mathscr{T}_c = (\cup_{t=0}^{n_c} \mathcal{S}_t,\cup_{t=0}^{n_c-1} \mathcal{E}_t)$, where $e \in \mathcal{E}_{t-1}$, with label $\ell(e)$, connects a state (or vertex) $s_\textnormal{l}(e) \in \mathcal{S}_{t-1}$ to a state $s_\textnormal{r}(e) \in \mathcal{S}_{t}$,   based on $\mat{H}_c$. \\
\For{$t \in [1:n_c]$}{
\For{$s \in \mathcal{S}_t$}{
$\alpha_t(s) 
\leftarrow \maxstar^*_{\substack{e \in \mathcal{E}^{(t-1)}: s_\textnormal{r}(e) = s}}$ $\quad \quad \quad \left( \alpha_{t-1}(s_\textnormal{l}(e)) +  (\ell(e)+u_{c, t-1}) \Gamma_{t-1}\right)$}} %
\For{$t \in [n_c:1]$}{
\For{$s \in \mathcal{S}_t$}{
$\beta_{t-1}(s) 
\leftarrow \maxstar^*_{\substack{e \in \mathcal{E}^{(t-1)}: s_\textnormal{l}(e) = s}}$ $\left( \beta_{t}(s_\textnormal{r}(e)) + ( \ell(e)+u_{c, t-1} )\Gamma_{t-1} \right)$}} %
\For{$t \in [1:n_c]$}{
$\Delta_{t-1} \leftarrow \maxstar^*_{\substack{e \in \mathcal{E}^{(t-1)}:\\ \ell(e) + u_{c, t-1} = 0}} \left( \alpha_{t-1}(s_\textnormal{l}(e))  + \beta_{t}(s_\textnormal{r}(e)) \right)
  - \maxstar^*_{\substack{e \in \mathcal{E}^{(t-1)}:\\ \ell(e) + u_{c, t-1} = 1}} \left( \alpha_{t-1}(s_\textnormal{l}(e)) + \beta_{t}(s_\textnormal{r}(e)) \right)$} %
   \textbf{return $(\Delta_0,\ldots,\Delta_{n_c-1})$} 
\end{algorithm}

\begin{algorithm}[t!]
\label{alg:BP4_mem}
    \caption{Generalized BP$_{4}$ with memory (\texttt{GMBP$_4$})}
    \KwData{A syndrome $\vect s$,
    a PCM $\mat H 
    = \left[\begin{smallmatrix}
        \mat H_0 \\ \mat H_1
    \end{smallmatrix}\right]$ where ${\mat H_0}_{ij} \in \{\mat I, \mat X\}$ and ${\mat H_1}_{ij} \in \{\mat I, \mat Z\}$, 
    graphs $\G_i^{(r)} = (\vv_i^{(r)}= {\vv^{(r)}_{\textnormal{c}, i}}\sqcup \vv_\textnormal{v}, \ee_i^{(r)}), \ i \in \{0, 1\}$,
    local (binary) PCMs $\Hscr = \left\{\mat H_c\right\}_{c \in \vv^{(r)}_{\textnormal{c}}}$ with coset leaders $\vect{u}_c = \vect{u}_c(\vect s_c)$, $c \in \vv^{(r)}$, 
    a maximum number of iterations $T_\textnormal{max}$,
     initial LLRs $\{\Lambda_v^\Wsf\in \Rbb\}_{v\in \vv_{\vtn}, \mat{W}\in\{\mat X, \mat Y, \mat Z\}}$,
    and
    a \textit{flag} $\in \{0, 1\}$.
    }
    \KwResult{%
    A Pauli string $\hat{\bm{\mat E}}$ such that $\langle \hat{\bm{\mat E}} , \mat H_{i,:}\rangle = s_i$ for all $i$ (success),
    or a Pauli string $\hat{\bm{\mat E}}$ such that $\langle \hat{\bm{\mat E}} , \mat H_{i,:}\rangle \neq s_i$ for some $i$ (failure).
    }

    Initialization: 
    Construct the Tanner graph 
    \begin{align*}
    \G^{(r)} 
    &= 
    \left(\vv^{(r)}= {\vv^{(r)}_\textnormal{c}} \sqcup {\vv_\textnormal{v}}, \ee^{(r)}\right) 
    \end{align*}
    where ${\vv^{(r)}_\textnormal{c}} = {\vv^{(r)}_{\textnormal{c},0}} \sqcup {\vv^{(r)}_{\textnormal{c},1}}$ and $\ee^{(r)} = \ee^{(r)}_0 \sqcup \ee^{(r)}_1$

    $\Gamma^{\mat W}_{v \to c} \leftarrow \Lambda^{\mat W}_v$, $v\in \vv_\textnormal{v} \cong [n]$, $c \in \ee^{(r)}(v)$, $\mat W \in \{\mat X, \mat Y, \mat Z\}$ %

    \For{$t \in [T_{\max}]$}{
        \For{$i \in \{0,1\}, c \in {\vv_{\textnormal{c},i}^{(r)}}, v \in \ee^{(r)}(c)$}{
        $\Gamma_{v\to c} \leftarrow
            \log (1 + {\mathrm e}^{-\Gamma_{v\to c}^{\Wsf_i}})+
            \min(\Gamma_{v\to c}^\Ysf, \Gamma_{v \to c}^{\Wsf_{\Bar{\imath}}}) 
            - \log \left(1 + {\mathrm e}^{-\left|\Gamma_{v \to c}^\Ysf - \Gamma_{v \to            c}^{\Wsf_{\Bar{\imath}}}\right|}\right)$

        }
        
        \For{$c \in \vv_\ctn^{(r)}$}{
            $\{ \Delta_{c \to v} \}_{v \in \ee^{(r)}(c)} \leftarrow \texttt{SISO}\left(\mat H_c, \vect s_c, \{\Gamma_{v \to c}\}_{v\in \ee^{(r)}(c)}, \vect{u}_c(\vect{s}_c)\right)$
        }
        \For{$v\in \vv_\vtn, \mat W \in \{\mat X, \mat Y, \mat Z\}$}{
            $\Gamma_v^\mat{W} \leftarrow \Lambda^{\mat W}_v + \frac{1}{\alpha} \sum_{\substack{c \in \ee^{(r)}(v)\\\langle\mat W, \mat H_{cv}\rangle = 1}} \Delta_{c \to v}$
        }

        (Hard decision:) Let $\hat{\bm{\mat E}} \in \{\Isf, \Xsf, \Ysf, \Zsf\}^{n}$ be the vector where $\hat{{\mat E}}_v = \Isf$ if $\Gamma_v^\Wsf > 0$ for all $\Wsf\in \{\Xsf, \Ysf, \Zsf\}$ and $\hat{{\mat E}}_v = \argmin_{\Wsf\in \{\Xsf, \Ysf, \Zsf\}} \Gamma_v^\Wsf$ otherwise.

        \If{$\langle \hat{\bm{\mat E}} , \mat H_{i,:}\rangle = s_i$ \textnormal{for all rows} $\mat H_{i,:} \subseteq \mat H$}{\textbf{return} $\hat{\bm{\mat E}}$}

        \For{$v \in \vv_\vtn, c\in \ee^{(r)}(v), \Wsf\in \{\Xsf, \Ysf, \Zsf\}$}{
            $\Gamma_{v\to c}^\mat{W} \leftarrow \Gamma^{\mat W}_{v} - \frac{1}{\alpha}\langle\mat W, \mat H_{cv}\rangle \Delta_{c \to v}$ \label{line:17}
        }
    }

    \If{\textit{flag}}{
        \textbf{return} \texttt{OSD-$1$}$(\mat H, \{\Gamma_v^\Wsf\}_{v\in \ee^{(r)}, \Wsf\in \{\Xsf, \Ysf, \Zsf\}})$
    }

    \textbf{return} $\hat{\bm{\mat E}}$ (failure) 

\end{algorithm}

\subsection{Trellis-Based MAP decoder}
\label{sec:trellis-based-MAP}

The proposed trellis-based MAP decoder is outlined in Algorithm~\ref{alg:SISO}. It follows the standard forward/backward recursions of the BCJR algorithm in the logarithm domain, but  adapted for a nonzero syndrome vector, i.e., the code trellis is modified by adding a coset leader (from the code coset corresponding to the syndrome vector) to all labelled paths (or codewords) in the trellis.  %
A (minimal) code trellis for a linear block code can for instance be build using partial syndromes as outlined in \cite{Wolf1974}. We note that the complexity can be reduced by adopting a dual code approach as detailed for classical  codes in \cite{Montorsi2001} when the generalized check codes have rates above $\nicefrac{1}{2}$.

The complexity of Algorithm~\ref{alg:SISO} is proportional to the number of edges in the underlying trellis. The maximum number of trellis states for each depth is $2^{\min(k_c, n_c-k_c)}$ which can be reached at depth $\min(k_c, n_c-k_c)$ \cite{Wolf1974}, if the input $\mat H_c$ is an $(n_c-k_c)\times n_c$ full-rank matrix. Moreover,  there are $k_c$ depths with two outgoing edges from each state, and $n_c-k_c$ depths with a single outgoing edge from each state. In general, the total number of edges can be seen to be at most %
\begin{align*}
\begin{cases}
2(2^{k_c+1}-2) + 2^{k_c} (n_c-2k_c) & \\
 = 2^{k_c} (4+n_c-2k_c) -4 & \hspace{-1cm} \text{if $ k_{c} \leq n_c-k_c$}\\
2(2^{n_c-k_c+1}-2) + 2 \cdot 2^{n_c-k_c} (n_c-2(n_c-k_c)) & \\
= 2^{n_c-k_c+1} (2-n_c+2k_c) -4 & \hspace{-1cm} \text{otherwise}. \end{cases}
\end{align*}
When the checks of a quantum Tanner code are fully combined based on the local code structure, $n_c = \Delta^2$ and $k_c=k_\textnormal{A} k_\textnormal{B}$, while a partial combination gives $n_c \leq \Delta^2$ and $k_c\leq k_\textnormal{A} k_\textnormal{B}$. Furthermore, if the checks coming from $\hha$ in $\hha\otimes \hhb$ are combined, then the different $n_c$ will be the products of $\Delta$ by the row weights of $\hhb$. In \cref{tab:codes}, we tabulate   the total number of edges from the formula above, averaged over the generalized checks, for the different codes studied in this work.

\subsection{Improved BP Decoding Using Generalized Checks}
The pseudo-code of the proposed decoder is given in Algorithm~\ref{alg:BP4_mem}. It is based on the BP$_4$ decoder with memory effects (MBP$_4$) \cite[Alg.~1]{KuoLai22_2}, and has a flag for OSD post-processing. The  selection of the scaling parameter $\alpha$ is important to optimize the performance for either the relatively high or the low error rate region, with an $\alpha > 1.0$ typically giving improved performance for lower error rates (in the so-called error floor region). 
Note that we keep the scaling parameter $\alpha$ in \cref{line:17}, rather than removing it, as they do in \cite[Alg.~1]{KuoLai22_2}, where this is referred to as fixed inhibition. This is because we observe better performance that way for the codes we have simulated.

The proposed decoder has an inherent complexity/performance trade-off in the sense that the size of the generalized checks from either Algorithm~\ref{alg:find-grouping-greedy} or from the local codes can be made smaller (as discussed in \cref{sec:cycles-QT}) in order to reduce the complexity of running Algorithm~\ref{alg:SISO}, which may lead to poorer decoding performance. This trade-off will be explored in the numerical results. Moreover, in order to lower the complexity further, we can run  a hybrid version where the standard  MBP$_4$ 
decoder is run first without OSD post-processing, and if it does not converge, we run Algorithm~\ref{alg:BP4_mem} with or without OSD post-processing.

The asymptotic complexity of our proposed decoder for quantum Tanner codes, Algorithm \ref{alg:BP4_mem}, is the same as for the standard (M)BP$_4$ decoder, i.e., $O(n)$, when the size of the local codes and the maximum number of iterations $T_{\textnormal{max}}$ is kept constant. Constant-size local codes are indeed realistic, as the quantum Tanner codes are asymptotically good qLDPC codes in this case, at least if the constant $\Delta$ is large enough \cite{LeverrierZemor22_1}. Note that all the inner for-loops are parallelizable in Algorithm \ref{alg:BP4_mem} (and also in Algorithm \ref{alg:SISO}). %
However, the constant cost of the trellis-based MAP decoder of Algorithm \ref{alg:SISO} can be daunting if the local codes are large, see \cref{sec:trellis-based-MAP} and \cref{tab:codes} (last column).
The OSD-$1$ we use here is the standard one, see~\cite{PanteleevKalachev21_1}. The complexity of OSD-$\omega$ is $O(n^3 + n2^\omega)$.

\section{Numerical Results}\label{sec:numerical-results}

\begin{table*}[t!]
 {\begin{center}
 \caption{List of Simulated Codes$^a$}
 \label{tab:codes}
 \vspace{-2.0ex}
 \scalebox{0.85}{
     \begin{tabular}
     {ccccccc}
       \toprule 
       \multirow{2}{1.1cm}{$[[n,k,d]]$} & \multirow{2}{1.4cm}{Construction} & \multicolumn{2}{c}{Row weight} & \multirow{2}{0.8cm}{Figure} & \multirow{2}{0.8cm}{Source} & Average trellis complexity \\
       &&average & range &&& upper bound ($r$ combined checks)\\
       \midrule
       $[[144, 12, 12]]$ & BB &$6$&$6$&\ref{fig:logical_error_rate_144}&\cite{BravyiCrossGambettaMaslovRallYoder2024} & $12(1)$, $14503(8)$\\[0.3em]
       $[[144, 12, 12]]$ & GB &$8$&$8$&\ref{fig:logical_error_rate_144}&\cite{MostadRosnesLin2025_1} & $16(1)$, $608(4)$, $15356(8)$\\[0.3em]
       $[[144, 12, 11]]$ & QT &$9$&$9$&\ref{fig:logical_error_rate_144}&\cite{LeverrierRozendaalZemor25_1sub} & $18(1)$, $22652(9)$\\[0.3em]
       $[[377, 25, 5]]$ & HGP & $6.75$ & $6$--$7$ & \ref{fig:logical_error_rate_432} & \cite{Liu2026database} & $13.5(1)$, $14402(8)$\\[0.3em]
       $[[416, 18, \leq 22]]$ & LP & $8$ & $8$ & \ref{fig:logical_error_rate_432} & \cite{Liu2026database} & $16(1)$, $1870(5)$\\[0.3em]
       $[[432, 16, \leq 26]]$ & QT & $13.31$ & $12$--$16$ & \ref{fig:logical_error_rate_432}, \ref{fig:logical_error_rate_432_different_level} & \cite{Wang-etal26_1sub} & $26.6(1)$, $449(3)$, $659(4)$, $3448(6)$, $212988(12)$\\[0.3em]
       $[[432, 20, \leq 22]]$ & QT & $9$ & $9$ & \ref{fig:logical_error_rate_432} & \cite{LeverrierRozendaalZemor25_1sub} & $18(1)$, $20476(9)$\\[0.3em]
       $[[686,34,13]]$ &QT&$13$&$12$--$16$ & \ref{fig:logical_error_rate_686} & \cite{MostadRosnesLin25_1} & $26(1)$, $221180(12)$\\[0.3em]
       $[[686,28,14]]$ &QT&$13$&$12$--$16$ & \ref{fig:logical_error_rate_686} & \cite{MostadRosnesLin25_1}& $26(1)$, $221180(12)$\\[0.3em]
       $[[688,30, \geq 19]]$ & GB & $16$ & $16$ & \ref{fig:logical_error_rate_686} & \cite{MostadRosnesLin25_1} & $32(1)$, $52243(8)$\\
       \bottomrule
     \end{tabular}}%
     
     \end{center}} 
 \footnotesize{$^a$Here, QT stands for quantum Tanner code. For $r=1$, we show the bound for the complexity of decoding the on dual code locally, as this is less complex and equivalent to the box-plus combining rule (see \cref{line:boxplus} in Algorithm~\ref{alg:Relay-BP4}). %
 For the $[[432, 16, \leq 26]]$ code, we give the average trellis complexity of the best performing grouping for $r=3$ and $4$ in \cref{fig:logical_error_rate_432_different_level}. Note that when $r$ is larger there are fewer checks to process in each BP decoding iteration.
 }
\vspace{-2ex}
\end{table*} 

Next, we compare the logical error rate decoding performance of qLDPC codes on the depolarizing channel, under MBP$_4$ with and without OSD of order $1$   post-processing as in~\cite{PanteleevKalachev21_1}; with and without MAP decoding of generalized check nodes.\footnote{The BP$_4$ decoding part of our simulations is based on the implementation at \url{https://github.com/kit-cel/Quantum-Neural-BP4-demo} with the memory and OSD part implemented in house, together with the MAP decoder of the generalized checks.}
We select $\alpha=1.6$ ($\alpha=1.5$ for the BB code) for our simulations, unless specified otherwise, and present results for three lengths $n=144$, $432/416/377$, and $686/688$.  Unless specified otherwise, we use a maximum of $6$ BP iterations in the simulations. For generalized decoding, we use a hybrid version where we first run the conventional MBP$_4$ decoder and if it does not converge, we run Algorithm~\ref{alg:BP4_mem}, both using a maximum of $6$ iterations. When using OSD post-processing, this step is only run after the second step of generalized decoding.  We also compare with the recently introduced  Relay-BP decoding algorithm in \cite[Alg.~1]{Mueller-etal25_1sub}. Here, we adapt Relay-BP to a quaternary version referred to as  Relay-BP$_4$ and given as Algorithm~\ref{alg:Relay-BP4} in Appendix~\ref{app:relay_bp}.

\begin{figure}[t]
\begin{center}
\subfloat{\includegraphics[width=0.95\columnwidth]{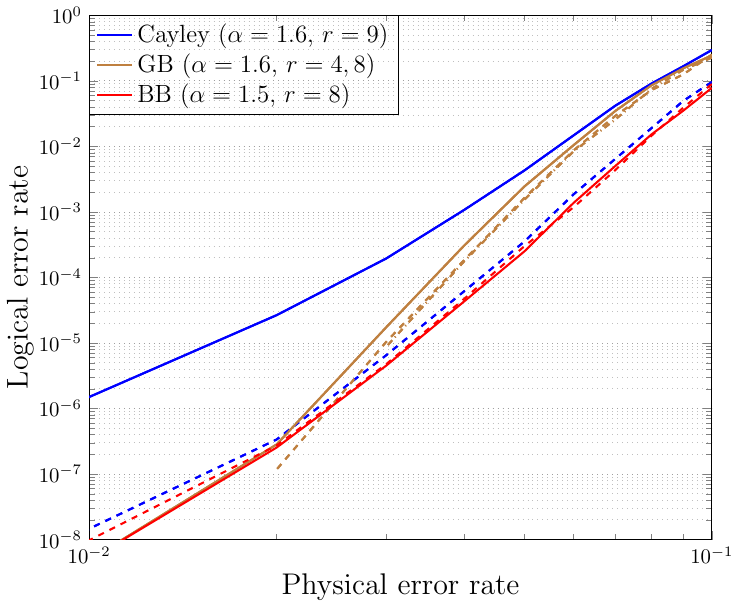}} %
\end{center}
\vspace{-3ex}
\caption{Comparing the logical error rate performance of (generalized) MBP$_4$+OSD-$1$ decoding (with $\alpha=1.6$; $\alpha=1.5$ for the BB code) of three different $[[144,12]]$ qLDPC codes on the depolarizing channel. Solid curves are for standard MBP$_4$+OSD-$1$ decoding, while dashed and dash-dotted curves are for generalized  MBP$_4$+OSD-$1$ decoding.
} \label{fig:logical_error_rate_144}
\vspace{-2ex}
\end{figure}

\begin{figure}[t]
\begin{center}
\subfloat{\includegraphics[width=0.95\columnwidth]{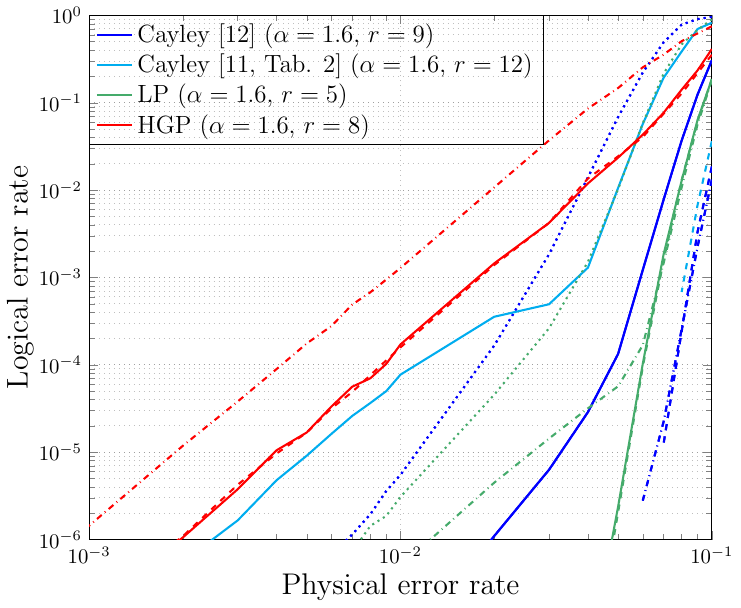}} %
\end{center}
\vspace{-3ex}
\caption{Comparing the logical error rate performance of (generalized) MBP$_4$ decoding with and without OSD-$1$ post-processing (with $\alpha=1.6$) of the $[[432,20,\leq 22]]$ quantum Tanner code from \cite{LeverrierRozendaalZemor25_1sub} (dark blue curves), the $[[432,16,\leq 26]]$ quantum Tanner code from \cite[Tab.~2]{Wang-etal26_1sub} (light blue curves), and the $[[416,18,\leq 22]]$ LP code and the $[[377,25,5]]$ HGP code from the GitHub qLDPC code repository~\cite{Liu2026database} (green and red curves) on the depolarizing channel. Solid curves are for standard MBP$_4$+OSD-$1$ decoding, dashed curves are for  generalized  MBP$_4$+OSD-$1$ decoding, while curves with dots are for decoding without OSD-$1$ post-processing (dotted curves for standard MBP$_4$ decoding with a maximum of $6$ BP iterations and dash-dotted curves for generalized  MBP$_4$ decoding with a maximum of $25$ BP iterations).
} \label{fig:logical_error_rate_432}
\vspace{-2ex}
\end{figure}

In Fig.~\ref{fig:logical_error_rate_144}, we compare the performance of Algorithm~\ref{alg:BP4_mem} (MBP$_4$+OSD-$1$), without (solid curves) and with (dashed curves) MAP decoding of local codes. We compare the performance of the $[144,12,12]$ BB code  from \cite{BravyiCrossGambettaMaslovRallYoder2024} (red curves) with that of an optimized $[144,12,12]$ GB code from \cite{MostadRosnesLin2025_1} (brown curves) and the $[[144,12,11]]$ quantum Tanner code from \cite{LeverrierRozendaalZemor25_1sub} (blue curves). 
For the quantum Tanner code, we consider the full grouping (see \cref{rem:ways-to-group}(1)), while Algorithm \ref{alg:find-grouping-greedy} is used to find a grouping for the GB and BB code (see also \cref{rem:ways-to-group}(4)).
From the figure, we can observe a significant performance improvement for the quantum Tanner code, while only modest gain for the GB and BB codes (the number of combined checks $r$ is given in the legend). Interestingly, the quantum Tanner code performs comparably to the BB code under generalized (hybrid) decoding. 
 
 In \cref{fig:logical_error_rate_432}, we show the corresponding results for length $n=432/416/377$. We compare the  performance of (generalized) MBP$_4$ decoding with and without OSD of order $1$ post-processing  (with $\alpha=1.6$) of the $[[432,20,\leq 22]]$ quantum Tanner code from \cite{LeverrierRozendaalZemor25_1sub} (dark blue curves), the $[[432,16,\leq 26]]$ quantum Tanner code from \cite[Tab.~2]{Wang-etal26_1sub} (light blue curves), and the $[[416,18,\leq 22]]$ LP code and the $[[377,25,5]]$ HGP  code from the GitHub qLDPC code repository~\cite{Liu2026database} (green and red curves). Solid curves are for standard MBP$_4$+OSD-$1$ decoding, dashed curves are for  generalized  MBP$_4$+OSD-$1$ decoding, while curves with dots are for decoding without OSD of order $1$ post-processing  (dotted curves for standard MBP$_4$ decoding with a maximum of $6$ BP iterations and dash-dotted curves for generalized  MBP$_4$ decoding with a maximum of $25$ BP iterations). 
The full grouping is used for the quantum Tanner codes (see \cref{rem:ways-to-group}(1)), while Algorithm \ref{alg:find-grouping-greedy} is used globally for the LP code and locally for the HGP code. 
 As for length $n=144$, for both quantum Tanner codes, we can observe a significant performance improvement with generalized decoding, while  for the LP and HGP codes there is no gain compared to conventional MBP$_4$ decoding. Moreover, it can be observed that OSD of order $1$ post-processing boosts performance  for both conventional and generalized MBP$_4$ decoding.

\begin{figure}[t]
\begin{center}
\subfloat{\includegraphics[width=0.95\columnwidth]{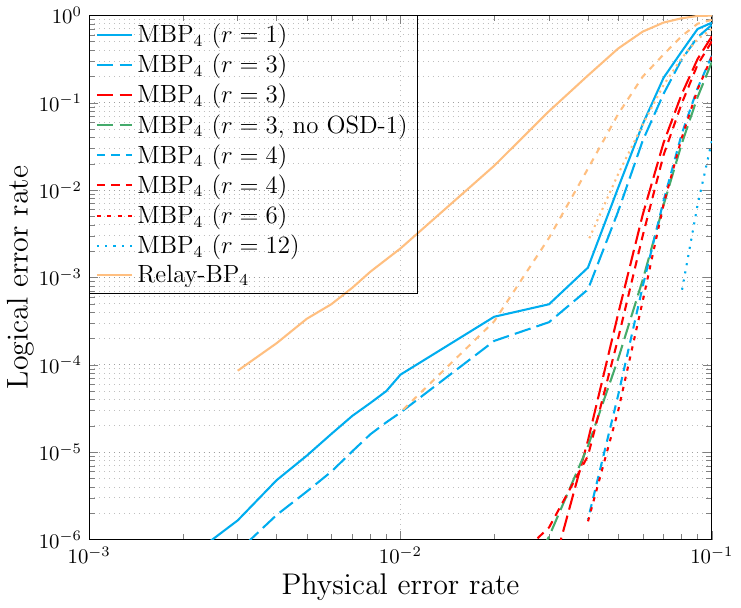}} %
\end{center}
\vspace{-3ex}
\caption{Comparing the logical error rate performance of (generalized) MBP$_4$+OSD-$1$ decoding (with $\alpha=1.6$) of the $[[432,16,\leq 26]]$ quantum Tanner code from \cite[Tab.~2]{Wang-etal26_1sub} %
for different values of $r$ on the depolarizing channel. 
The solid light blue curve is
for standard MBP$_4$+OSD-$1$ decoding, and dashed to dotted light blue and red curves correspond (in this order) to increasing values of $r$  ($r=3,4,6,12$). As a comparison, the performance of Relay-BP$_4$ (Algorithm~\ref{alg:Relay-BP4}) is shown  with different values for the maximum number of relay legs ($R$ in Algorithm~\ref{alg:Relay-BP4}) and the maximum number of BP iterations for each leg  ($T_r$ in  Algorithm~\ref{alg:Relay-BP4}) (orange curves; the orange dotted curve has the same $R$ and $T_r$ as the dashed one, but with OSD-$1$ post-processing). The green curve is with no OSD-$1$ post-processing, but with a maximum of $50$ BP iterations. 
} \label{fig:logical_error_rate_432_different_level}
\vspace{-2ex}
\end{figure}

Next, in \cref{fig:logical_error_rate_432_different_level}, we show the  performance of (generalized) MBP$_4$+OSD-$1$ decoding (with $\alpha=1.6$) of the $[[432,16,\leq 26]]$ quantum Tanner code from \cref{fig:logical_error_rate_432}, for different values of $r$ to assess the performance-complexity trade-off of the proposed generalized MBP$_4$ decoder. The solid light  blue curve is for standard MBP$_4$+OSD-$1$ decoding, and dashed to dotted light blue and red curves correspond   (in this order) to increasing values of $r$  ($r=3,4,6,12$). The light blue curves are for structured partial and full groupings, while the red curves are for unstructured local ones made with Algorithm \ref{alg:find-grouping-greedy}, see \cref{rem:ways-to-group}.
Interestingly, there is a significant performance improvement even for unstructured $r=3$ (while there is only limited improvement for the other $r=3$). Increasing $r$ to $4$ improves the unstructured grouping only  marginally, while the structured one is better. Increasing $r$ to $6$ does not bring any further gain over the best $r=4$ curve, while increasing it all the way to $12$ (structured full grouping) improves performance even further, but at the cost of a significant increase in decoding complexity (see \cref{tab:codes}).
This is likely because the $4\times 7$ parity-check matrices used have more $4$-cycles than the $3\times 7$ ones, but in such a way that in the Kronecker product of these two, many of the $4$-cycles are within blocks of three rows. This also shows that Algorithm \ref{alg:find-grouping-greedy} can be both better and worse than a structured grouping.
As a comparison, the performance of Relay-BP$_4$ (Algorithm~\ref{alg:Relay-BP4}) %
with and without OSD of order $1$ post-processing is shown as well (orange curves) with different values for the maximum number of relay legs ($R$ in Algorithm~\ref{alg:Relay-BP4}) and the maximum number of BP iterations for each leg  ($T_r$ in  Algorithm~\ref{alg:Relay-BP4}). The memory strengths for each leg are drawn independently and uniformly from a common range $[\gamma_\textnormal{c} -  \gamma_\textnormal{w}/2, \gamma_\textnormal{c} + \gamma_\textnormal{w}/2]$ with $\gamma_\textnormal{c} = 0.3$ (sweep optimized) and fixed $\gamma_\textnormal{w}=0.66$, and the number of solutions sought ($S$ in  Algorithm~\ref{alg:Relay-BP4}) is set equal to $R$. With modest values of $R=5$ and $T_r=6$, and no OSD post-processing, the performance (solid orange curve) is much worse than for MBP$_4$+OSD-$1$  decoding (light blue solid curve). Increasing $R$ to $25$ and $T_r$ to $30$ (which entails a significant increase in the overall number of BP iterations), yields a decent performance improvement (dashed orange curve), but it is still worse than the light blue solid curve. Adding OSD of order $1$ post-processing on top, i.e., after each relay leg, does bring an additional benefit (dotted orange curve), but at a high computational cost.
We also show a dashed green curve for $r=3$ (with the same unstructured local groupings as for the red dashed curve), but with a maximum of $50$ BP iterations and with no OSD-$1$ post-processing. Notably, this curve is better than the red dashed curve for error rates down to $10^{-5}$ where there is a crossing, i.e., removing the OSD post-processing and increasing the maximum number of BP iterations from $6$ to $50$ improves performance. Moreover,  the performance is significantly better compared to the Relay-BP$_4$ curve with $R=25$ and $T_r = 30$ (dashed orange curve), and with what appears to also be a lower decoding cost as the average  trellis complexity of the generalized checks is $449$ ($r=3$; see \cref{tab:codes}) which should be compared to $26.6 \cdot 3 \approx 80$ ($r=1$, but with $3$ times as many checks as for $r=3$), while the maximum number of BP iterations is about $7.5$ times higher for   the Relay-BP$_4$ curve (orange dashed curve). Finally, we remark that  adding OSD-$1$ post-processing on top of the green dashed curve (not shown) improves the performance only slightly, which is in contrast to standard MBP$_4$ decoding where typically it is difficult to come close to MBP$_4$+OSD-$1$ with a small number of BP iterations,  by removing the OSD post-processing stage and increasing the number of BP iterations.

\begin{figure}[t]
\begin{center}
\subfloat{\includegraphics[width=0.95\columnwidth]{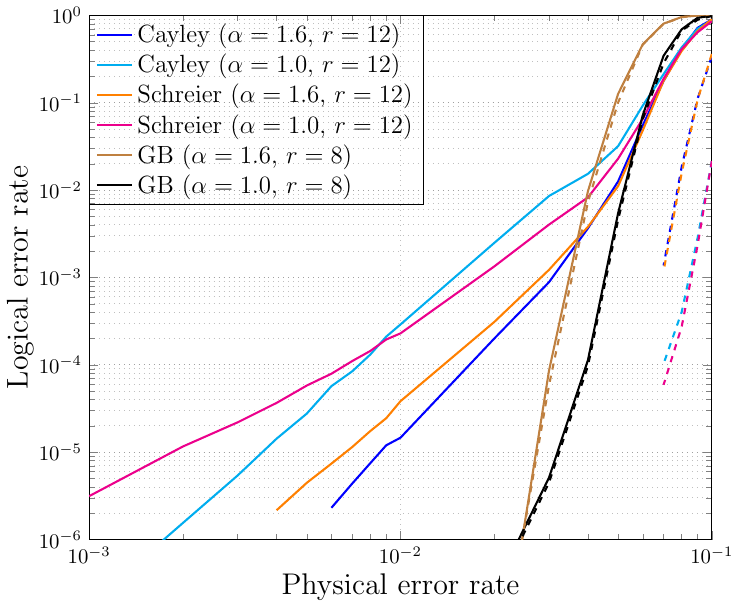}} %
\end{center}
\vspace{-3ex}
\caption{Comparing the logical error rate performance  of (generalized) MBP$_4$+OSD-$1$ decoding of two (generalized) quantum Tanner codes based on Cayley  (dark and light blue curves) and Schreier (orange and magenta curves) graphs, respectively, and a GB code (brown and black curves) on the depolarizing channel for $n=686$ ($688$ for the GB code). Solid curves are for standard MBP$_4$+OSD-$1$ decoding, while dashed curves are for  generalized  MBP$_4$+OSD-$1$ decoding. For all three codes, two different values for $\alpha$ ($1.0$ and $1.6$) are compared.
} \label{fig:logical_error_rate_686}
\vspace{-2ex}
\end{figure}

Finally, in Fig.~\ref{fig:logical_error_rate_686}, we show the corresponding results for length $n=686/688$. Here, we compare the performance of an optimized  $[[686,28,14]]$ generalized quantum Tanner code (from the Schreier construction; orange and magenta curves) with that of an optimized  $[[686,34,13]]$ quantum Tanner code (from the Cayley construction; dark and light blue curves), both taken from~\cite[Tab.~I]{MostadRosnesLin25_1}. For comparison, we also show the performance of an optimized $[[688,30,\geq 19]]$ GB code  \cite[Tab.~II]{MostadRosnesLin25_1} (brown and black curves).  The performance with two different values of $\alpha$ is shown ($\alpha=1.0$ and $1.6$). 
As before, we use the full grouping for the quantum Tanner codes and Algorithm \ref{alg:find-grouping-greedy} to find an unstructured grouping for the GB code.
Again, generalized decoding (dashed curves) gives enhanced  performance (solid curves are for conventional decoding) for the two Tanner codes, while only a modest gain is observed for the GB code. Interestingly, here the Tanner codes outperform the GB code by a large margin. We also observe the effect of the two different values of $\alpha$, with the smaller one showing improved performance for higher error rates. This  is consistent with previous results from the literature.

\begin{algorithm}[t!]\label{alg:Relay-BP4}
   \caption{Relay-BP$_4$}
   \KwData{A syndrome $\vect s$,
    a PCM $\mat H 
    = \left[\begin{smallmatrix}
        \mat H_0 \\ \mat H_1
    \end{smallmatrix}\right]$ where ${\mat H_0}_{ij} \in \{\mat I, \mat X\}$ and ${\mat H_1}_{ij} \in \{\mat I, \mat Z\}$,
    initial LLRs $\{\Lambda_v^\Wsf\in \Rbb\}_{v\in \vv_{\vtn}, \mat{W}\in\{\mat X, \mat Y, \mat Z\}}$, a number of solutions to be
found $S$, 
    a maximum number of legs for the relay $R$,
    a maximum number of iterations per leg $T_r$,
    and
    memory strengths for each leg $\{\gamma(r)\}_{r\in [R]}$.
    }
   \KwResult{%
    A Pauli string $\hat{\bm{\mat E}}$ such that $\langle \hat{\bm{\mat E}} , \mat H_{i,:}\rangle = s_i$ for all $i$ (success),
    or a Pauli string $\hat{\bm{\mat E}}$ such that $\langle \hat{\bm{\mat E}} , \mat H_{i,:}\rangle \neq s_i$ for some $i$ (failure).}

    $\hat{\bm{\mat E}} \leftarrow \emptyset$, $s \leftarrow 0$, $\omega_{\hat{\bm{\mat E}}} \leftarrow 0$

    $\Gamma^{\mat W}_{v} \leftarrow \Lambda^{\mat W}_v$, $v\in \vv_\textnormal{v} \cong [n]$, $\mat W \in \{\mat X, \mat Y, \mat Z\}$

    \For{$r \in [R]$}{
    \For{$v\in [n], \mat W\in \{\Xsf, \Ysf, \Zsf\}$}{
    $\Gamma_{v\to c}^\Wsf \leftarrow \Lambda_v^\Wsf$ 
    }

        \For{$i \in \{0,1\}, c \in {\vv_{\textnormal{c},i}}, v \in \ee(c)$}{
        $\Gamma_{v\to c} \leftarrow
            \log (1 + {\mathrm e}^{-\Gamma_{v\to c}^{\Wsf_i}})+
            \min(\Gamma_{v\to c}^\Ysf, \Gamma_{v \to c}^{\Wsf_{\Bar{\imath}}}) 
            - \log \left(1 + {\mathrm e}^{-\left|\Gamma_{v \to c}^\Ysf - \Gamma_{v \to            c}^{\Wsf_{\Bar{\imath}}}\right|}\right)$
        }

    \For{$t\in [T_r]$}{
    \For{$v\in [n], \Wsf \in \{\Xsf, \Ysf, \Zsf\}$}{
    $\Lambda_v^\Wsf(t) \leftarrow (1-\gamma_v(r)) \Lambda_v^\Wsf + \gamma_v(r) \Gamma_v^\Wsf$
    }

    \For{$c\in \vv_\textnormal{c}, v \in \ee(c)$}{
    $\Delta_{c\to v} \leftarrow 
    (-1)^{s_c} \boxplus_{v'\in \ee(c) \setminus \{v\}}
    \Gamma_{v'\to c}$ \label{line:boxplus}
    }

    \For{$v\in \vv_\vtn, \mat W \in \{\mat X, \mat Y, \mat Z\}$}{
            $\Gamma_{v}^\mat{W} \leftarrow \Lambda^{\mat W}_v(t) + \sum_{\substack{c \in \ee(v)\\\langle\mat W, \mat H_{cv}\rangle = 1}} \Delta_{c \to v}$
        }

    \For{$v \in \vv_\vtn, c\in \ee(v), \Wsf\in \{\Xsf, \Ysf, \Zsf\}$}{
            $\Gamma_{v\to c}^\mat{W} \leftarrow \Gamma^{\mat W}_{v} - \langle\mat W, \mat H_{cv}\rangle \Delta_{c \to v}$
        }

        \For{$i \in \{0,1\}, c \in {\vv_{\textnormal{c},i}}, v \in \ee(c)$}{
        $\Gamma_{v\to c} \leftarrow
            \log (1 + {\mathrm e}^{-\Gamma_{v\to c}^{\Wsf_i}})+
            \min(\Gamma_{v\to c}^\Ysf, \Gamma_{v \to c}^{\Wsf_{\Bar{\imath}}}) 
            - \log \left(1 + {\mathrm e}^{-\left|\Gamma_{v \to c}^\Ysf - \Gamma_{v \to            c}^{\Wsf_{\Bar{\imath}}}\right|}\right)$
        }

    Let $\hat{\bm{\mat E}}(t) \in \{\Isf, \Xsf, \Ysf, \Zsf\}^{n}$ be the vector where $\hat{{\mat E}}_v(t) = \Isf$ if $\Gamma_v^\Wsf > 0$ for all $\Wsf\in \{\Xsf, \Ysf, \Zsf\}$ and $\hat{{\mat E}}_v(t) = \argmin_{\Wsf\in \{\Xsf, \Ysf, \Zsf\}} \Gamma_v^\Wsf$ otherwise.

    \If{$\langle \hat{\bm{\mat E}}(t) , \mat H_{i,:}\rangle = s_i$ \textnormal{for all rows} $\mat H_{i,:} \subseteq \mat H$}{

        $\omega_r \leftarrow \omega(\hat{\bm{\mat E}}(t)) = \sum_{v\in \vv_\textnormal{v}, \hat{\bm{\mat E}}(t)_v = \Wsf \neq \Isf} \Lambda_v^\Wsf$

        \If{$\omega_r < \omega_{\hat{\bm{\mat E}}}$}{
        $\hat{\bm{\mat E}} \leftarrow \hat{\bm{\mat E}}(t)$

        $\omega_{\hat{\bm{\mat E}}} \leftarrow \omega_r$
    
    }
    \textbf{break};
    }
    
    }
    \If{$s = S$}{
    \textbf{break};
    }
    }

   \Return{$\hat{\bm{\mat E}}$}
\end{algorithm}

\section{Conclusion and Future Work}

We proposed an improved iterative BP decoder for quantum Tanner codes that  exploits the underlying local code structure by combining check nodes into more powerful generalized checks that are decoded using a MAP decoder as part of the check node processing of each decoding iteration. The proposed decoder significantly outperforms the standard quaternary BP decoder with memory effects, as well as the recently proposed Relay-BP decoder, at the expense of a higher decoding complexity, on the depolarizing channel. With the new decoder, we show that quantum Tanner codes can outperform GB and LP codes of  comparable parameters, which shows that this class of codes can be competitive also in the finite codelength regime. Notably, combining checks into more powerful ones using a proposed greedy algorithm does not give noticeable gains for other classes of qLDPC codes, like LP and HGP codes.
To explain the gain in performance for quantum Tanner codes, we showed how the $4$-cycles of their Tanner graphs are reduced when combining checks, and gave a way to construct quantum Tanner codes where the Tanner graph has a designed girth when combining checks.

As future work, we will leverage the product structure of the local codes as well as investigate suboptimal SISO decoding algorithms for generalized checks to lower the decoder's overall complexity. It would also be interesting to see if good codes satisfying the $2$-TNC condition can be constructed, and how they would perform under generalized decoding. Finally, evaluating the performance under circuit level noise models is an interesting avenue for further research.  %

\appendices

\section{Relay-BP$_4$} \label{app:relay_bp}
Below in Algorithm~\ref{alg:Relay-BP4}, we give the version of Relay-BP that we have run in our simulations, referred to as Relay-BP$_4$. Relay-BP$_4$ is a quaternary version of \cite[Alg.~1]{Mueller-etal25_1sub}. In addition, we use the conventional box-plus check updating rule in \cref{line:boxplus} instead of the scaled-min check updating rule in \cite[Eq.~(1)]{Mueller-etal25_1sub}.

%

%


\end{document}